\documentclass[aps,prd,twocolumn,superscriptaddress,amsmath,amssymb]{revtex4-2}
\usepackage{orcidlink}
\usepackage{hyperref}
\newtheorem{proposition}{Proposition}
\newenvironment{proof}{\par\noindent\textit{Proof.} }{\hfill$\square$\par}
\hypersetup{colorlinks=true,linkcolor=blue,citecolor=blue,urlcolor=blue}
\allowdisplaybreaks
\begin{document}

\title{Temporal-Plane Carroll--Schr\"odinger Dynamics and Vortex Sectors in \texorpdfstring{\((2,2)\)}{(2,2)} Klein Space}

\author{Jos\'e Rojas\,\orcidlink{0009-0002-9869-767X}}
\affiliation{Instituto de F\'isica, Universidad Aut\'onoma de Santo Domingo,
Av.\ Alma Mater, Santo Domingo 10105, Dominican Republic}

\author{Melvin Arias\,\orcidlink{0000-0001-6014-3722}}
\email{melvin.arias@intec.edu.do}
\affiliation{Instituto de F\'isica, Universidad Aut\'onoma de Santo Domingo,
Av.\ Alma Mater, Santo Domingo 10105, Dominican Republic}
\affiliation{Laboratorio de Nanotecnolog\'ia, \'Area de Ciencias B\'asicas y
Ambientales, Instituto Tecnol\'ogico de Santo Domingo, Av.\ Los
Pr\'oceres, Santo Domingo 10602, Dominican Republic}

\begin{abstract}
Motivated by the temporal dynamics identified in the \(1+1\) Carroll--Schr\"odinger theory, we derive a post-Carrollian Schr\"odinger dynamics in flat Klein space with signature \((2,2)\). Starting from the tachyonic Klein--Gordon equation in double-polar coordinates and removing a spacelike carrier, the spatial radius behaves as an effective evolution parameter, whereas the temporal two-plane \((t_1,t_2)\) serves as the equal-radius configuration space. The additional time direction supplies an \(SO(2)\) temporal angular momentum \(J\), produces temporal vortex sectors, and gives the centrifugal contribution to the post-Carrollian momentum \(P_{\rm PC}=E_\tau^2/(2M_{\rm eff})+J^2/(2M_{\rm eff}\tau^2)\) in the Hamilton--Jacobi limit. We determine the regular Bessel modes, Gaussian packets, oscillator spectrum, radial \(SU(1,1)\) tower, equal-\(r\) continuity equation, \(\mathfrak{sch}(2)\) symmetry algebra, radial-ordered propagator, and the metaplectic organization of the quadratic sectors. Effective flat connections on the temporal configuration plane give Aharonov--Bohm, Landau, and Fock--Darwin analogues, while the two-body relative sector admits anyonic boundary conditions on the punctured temporal plane. As a curved extension, we derive a branch-dependent carrier reduction and apply it to an illustrative \(SO(2,1)\)-symmetric Kleinian Schwarzschild exterior, where the Kleinian gravitational source produces a lensing-type angular deviation on the temporal plane.
\end{abstract}

\maketitle

\section{Introduction}
\label{sec:intro}

The Carrollian limit, obtained as the \(c\to0\) contraction of the Poincar\'e group, has attracted renewed attention in recent years because of its applications in gravity, holography, cosmology, hydrodynamics, condensed matter physics, and related areas of high-energy theory~\cite{LevyLeblond1965,SenGupta1966,Bacry1970,Duval2014,deBoerStories2023,deBoer2022,BagchiBjorken,CarrollHydro,Duval2022,HansenObersOlingSogaard2022,CiambelliJaiAkson2025Foundations,Ruzziconi2026CarrollianHolography,BagchiKaleidoscope2026,TadrosKolar2024CarrollBlackHoles,Bidussi2022Fractons,BagchiFlatBands2023,FigueroaOFarrillPerezProhazka2023}. In the strict Carrollian regime, admissible dynamical equations become highly constrained~\cite{deBoer2022,Banerjee2023,Marsot2022,Henneaux1979}. It is therefore useful to distinguish the strict Carrollian limit from the post-Carrollian regime, where one retains the first subleading terms around the Carroll point~\cite{EckerGrumillerSalgadoRebolledo2025,Najafizadeh2025-1}. In this regime spatial evolution is no longer completely frozen, and one can obtain Schr\"odinger-type quantum dynamics.

The Carroll--Schr\"odinger equation is a particular realization of this post-Carrollian sector. It can be derived from the tachyonic Klein--Gordon equation by factoring out a rapidly oscillating spatial carrier and then taking the Carrollian scaling limit~\cite{Najafizadeh2025-1,Najafizadeh2025-2}, either by taking $c\to \epsilon c$ with $\epsilon\to0,$ while $x$ and $t$ are fixed, or by keeping $c=1$ and $x$ fixed while taking $t\to\epsilon t$ as $\epsilon \to 0$~\cite{PhysRevD.108.046019}; in either description, the limit is performed with the rescaling $\mu \to \mu/\epsilon^2$~\cite{Najafizadeh2025-2}, which, as will be shown in this work, corresponds to and may be interpreted physically as taking the Carrollian limit while keeping the post-Carrollian momentum finite, namely,

\begin{equation}
P_{\mathrm{PC}}=\frac{mc^{3}}{2v^{2}}=\frac{E^{2}}{2mc^{3}}.
\end{equation}
In \(1+1\) dimensions this procedure produces the Carroll--Schr\"odinger equation, in which the spatial coordinate \(x\) behaves as the evolution parameter and the temporal variable as the configuration coordinate on equal-\(x\) slices~\cite{Najafizadeh2025-1,Najafizadeh2025-2,Rojas2025,RojasArias2025ManyBody}.

The resulting \(1+1\) theory has nontrivial temporal dynamics. It admits an equal-\(x\) Hilbert-space formulation, unitary spatial evolution, exact Gaussian solutions, a continuity equation, a controlled classical limit, and a propagator description~\cite{Najafizadeh2025-2,Rojas2025}. These structures were later extended to the many-body sector, where one obtains temporal interactions and nonlinear effective descriptions~\cite{RojasArias2025ManyBody}. The emergence of a distinguished temporal dynamics in the one-dimensional theory naturally motivates the question of what happens when additional temporal degrees of freedom become available.

Klein space with signature \((2,2)\) provides a natural setting for this question. Klein space arises by analytic continuation of ordinary Lorentzian spacetime and has become an important setting in both quantum field theory and general relativity~\cite{ChenHuMao2025Klein,ChenHuMao2025General,AtanasovCelestialTorus2021,CrawleyGuevaraMillerStrominger2022,EassonPezzelle2024}. In particular, \((2,2)\) signature plays a special role in canonical quantization with two time directions, scattering amplitudes, self-dual structures, and split-signature gravitational solutions~\cite{ChenHuMao2025Klein,ChenHuMao2025General,AtanasovCelestialTorus2021,DuaryMaji2024KleinSpectral,OoguriVafa1990,CrawleyGuevaraMillerStrominger2022,EassonPezzelle2024}. Flat \(\mathbb K^{2,2}\) admits a double-polar decomposition into a temporal two-plane and a spatial two-plane, making it well suited for a post-Carrollian generalization in which the temporal sector is treated dynamically~\cite{ChenHuMao2025Klein,CrawleyGuevaraMillerStrominger2022}. The central question is then whether the second temporal degree of freedom produces any genuinely new quantum, classical, and geometric structures, and how they should be interpreted.
Theories with more than one temporal direction have also appeared in other contexts, most notably in the two-time physics program of Bars and collaborators. In that approach one starts from a parent system with an additional time and a local \(Sp(2,\mathbb R)\) phase-space gauge symmetry, and different gauge choices lead to different ordinary one-time systems~\cite{Bars1998TwoTimePhysics,Bars2000FieldTheory,Bars2008Gravity2T}. Recently, this framework has also been applied to Carroll particles by choosing gauge fixings whose one-time description reproduces Carroll dynamics~\cite{KamenshchikMuscolino2024Carroll2T,KamenshchikMarraniMuscolino2025}. In the present work, the additional temporal degree of freedom has a different geometric origin, coming from the temporal plane of Klein space and surviving the
post-Carrollian limit as part of the effective equal-radius configuration space.

The results of this paper are as follows. In Sec.~\ref{sec:derivation} we derive the \(2+2\) Carroll--Schr\"odinger equation from the tachyonic Klein--Gordon equation in flat Klein space. In Sec.~\ref{sec:solutions} we analyze the regular Bessel modes, the canonical Cartesian and radial temporal variables, Gaussian spreading, temporal vortex states, the temporal oscillator, its \(SO(2)\times SU(1,1)\) organization, temporal Aharonov--Bohm flux, and magnetic-like temporal connection sectors. In Sec.~\ref{sec:continuity} we derive the effective equal-\(r\) continuity equation and identify the conserved Hilbert-space norm. In Sec.~\ref{sec:symmetry} we construct the \(\mathfrak{sch}(2)\) symmetry algebra, its Bargmann--Eisenhart interpretation, and the metaplectic organization of the quadratic sectors. In Sec.~\ref{sec:propagator} we give the radial quantization and radial-ordered propagator, emphasizing its radial initial-value structure. In Sec.~\ref{sec:classical-limit} we derive the Hamilton--Jacobi limit, the centrifugal correction to the post-Carrollian momentum, and the temporal-plane trajectories. Finally, in Sec.~\ref{sec:gravitational-sector} we derive the curved carrier reduction on split-signature Einstein backgrounds and specialize it to an illustrative \(SO(2,1)\)-symmetric Kleinian Schwarzschild branch.

\section{Derivation of the \(2+2\)
Carroll--Schr\"odinger equation}
\label{sec:derivation}

\begingroup
\setlength{\abovedisplayskip}{3pt}
\setlength{\belowdisplayskip}{3pt}
\setlength{\abovedisplayshortskip}{2pt}
\setlength{\belowdisplayshortskip}{2pt}
We begin with flat Klein space $\mathbb{K}^{2,2}$, whose metric in Cartesian coordinates is
\begin{equation}
ds^2=-(dx^0)^2-(dx^1)^2+(dx^2)^2+(dx^3)^2 .
\label{eq:klein-cartesian}
\end{equation}
A natural parametrization of $\mathbb{K}^{2,2}$ is obtained by decomposing it into a temporal two-plane and a spatial two-plane. Following the polar decomposition used in recent studies of quantum field theory in Klein space~\cite{ChenHuMao2025Klein}, we write
\begin{equation}
\begin{aligned}
(x^0,x^1)&=(c\tau\cos\psi,c\tau\sin\psi),\\
(x^2,x^3)&=(r\cos\theta,r\sin\theta).
\end{aligned}
\label{eq:polar-decomp}
\end{equation}
where $\tau\geq0$ is the radial temporal coordinate and $\psi\in[0,2\pi)$ is the temporal angle, while $r$ and $\theta$ are the polar coordinates in the spatial plane; hence
\begin{equation}
ds^2=-c^2d\tau^2-c^2\tau^2d\psi^2+dr^2+r^2d\theta^2 .
\label{eq:k2_2_metric}
\end{equation}
For a scalar field $\phi$ on this background, with signature $(-,-,+,+)$, we use the density
\begin{equation}
\mathcal L=
\frac12\sqrt{|g|}\left(g^{\mu\nu}\partial_\mu\phi\,\partial_\nu\phi-\mu^2\phi^2\right),
\label{eq:flat-tachyonic-lagrangian}
\end{equation}
which gives the tachyonic Klein--Gordon equation
\begin{equation}
(\Box_g+\mu^2)\phi=0,
\qquad
\mu=\frac{mc}{\hbar} .
\label{eq:kg-convention}
\end{equation}
\endgroup
For the metric \eqref{eq:k2_2_metric}, this gives the equivalent double-polar form
\begin{multline}
\frac{1}{c^2\tau}\partial_\tau(\tau\partial_\tau\phi)
+\frac{1}{c^2\tau^2}\partial_\psi^2\phi
-\mu^2\phi
\\
-\frac{1}{r}\partial_r(r\partial_r\phi)
-\frac{1}{r^2}\partial_\theta^2\phi=0 .
\label{eq:kg-klein}
\end{multline}

To extract the post-Carrollian sector, we follow the same logic as in the one-dimensional construction and factor out a rapidly varying spatial phase~\cite{Najafizadeh2025-2},
\begin{equation}
\phi(\tau,\psi,r,\theta)=\frac{e^{-i\mu r}}{\sqrt{\mu}}\,\widetilde\phi(\tau,\psi,r,\theta).
\label{eq:phase-factor}
\end{equation}
Here we have used the radially symmetric carrier-wave factorization; however, an asymmetric factorization may also be taken along a fixed spatial direction, $\vec{P}=p\hat n$, where $\hat n$ denotes the unit vector in the momentum direction, a choice that makes the effective role of space as the driver of evolution more explicit.
 Substituting Eq.~\eqref{eq:phase-factor} into Eq.~\eqref{eq:kg-klein}, dividing by the common carrier factor, and canceling the mass term against the leading contribution from $\partial_r^2\phi$, one obtains
\begin{multline}
\frac{1}{c^2}\left[
\frac{1}{\tau}\partial_\tau(\tau\partial_\tau\widetilde\phi)
+
\frac{1}{\tau^2}\partial_\psi^2\widetilde\phi
\right]
\\
-
\left[
\partial_r^2\widetilde\phi
+\frac{1}{r}\partial_r\widetilde\phi
-2i\mu\partial_r\widetilde\phi
-\frac{i\mu}{r}\widetilde\phi
+\frac{1}{r^2}\partial_\theta^2\widetilde\phi
\right]=0 .
\label{eq:pre-carroll}
\end{multline}

We now take the Carrollian scaling appropriate to the post-Carrollian Carroll--Schr\"odinger sector. The rescaling is imposed on the temporal Cartesian coordinates while keeping spatial variables fixed, which leads to
\begin{equation}
c\tau\longrightarrow \epsilon c\tau,
\label{eq:temporal-contraction}
\end{equation}
together with the rescaling $\mu\to\mu/\epsilon^2$~\cite{Najafizadeh2025-2} this gives
\begin{multline}
\frac{\hbar^2}{2M_{\rm eff}}
\left[
\frac{1}{\tau}\partial_\tau(\tau\partial_\tau\widetilde\phi)
+
\frac{1}{\tau^2}\partial_\psi^2\widetilde\phi
\right]
+i\hbar\left(\partial_r+\frac{1}{2r}\right)\widetilde\phi
\\
-
\epsilon^2\frac{\hbar^2}{2mc}
\left(
\partial_r^2+\frac{1}{r}\partial_r+\frac{1}{r^2}\partial_\theta^2
\right)\widetilde\phi
=0 .
\label{eq:prelimit-scaled-envelope}
\end{multline}
Taking the limit $\epsilon\to0$, one sees that the $\theta$ dependence and second-order radial-derivative terms in Eq.~\eqref{eq:prelimit-scaled-envelope} disappear, leading to the Carroll--Schr\"odinger equation:
\begin{equation}
\boxed{
\frac{\hbar^2}{2M_{\rm eff}}
\left[
\frac{1}{\tau}\partial_\tau(\tau\partial_\tau\widetilde\phi)
+
\frac{1}{\tau^2}\partial_\psi^2\widetilde\phi
\right]
+i\hbar\left(\partial_r+\frac{1}{2r}\right)\widetilde\phi=0 .}
\label{eq:22CS-final}
\end{equation}
The $\mu\to\mu/\epsilon^2$ contraction while taking the $c\tau\to\epsilon c\tau$ limit used to derive the Carroll--Schr\"odinger equation may be interpreted physically as selecting the sector in which the post-Carrollian momentum
\begin{equation}
P_{\rm PC}=\frac{mc^3}{2v^2}=\frac{E^2}{2mc^3}
\label{eq:PPC-derivation-section}
\end{equation}
remains finite, as reviewed in Appendix~\ref{app:pc-scaling}. We use the representative \(c\to\epsilon c\) with all coordinates kept fixed, for which the finite combination
\begin{equation}
M_{\rm eff}=mc^3
\label{eq:Meff-main}
\end{equation}
is manifest and the resulting equation \eqref{eq:22CS-final} takes a Schr\"odinger-like form. As will be seen in this work, the resulting Carroll--Schr\"odinger equation describes the effective radial evolution in the spatial variable \(r\), while the temporal sector is governed by the radial temporal coordinate \(\tau\) and the temporal angle \(\psi\). The effective \(1/r\) term originates from the polar carrier-wave factorization in Eq.~\eqref{eq:phase-factor}; a more general treatment is given in Appendix~\ref{app:generalized-potential} and in Sec.~\ref{sec:gravitational-sector}. It is useful to note that under the half-density redefinition $\widetilde\phi=r^{-1/2}\Phi$, the $1/(2r)$ term cancels and $\Phi$ satisfies the standard two-dimensional free Schr\"odinger equation with $r$ playing the role of time. Since \(P_{\mathrm{PC}}\) is the quantity kept finite in the contraction, Eq.~\eqref{eq:22CS-final} has been written directly in momentum units and will be the form used in the rest of the paper.

\section{Solutions}
\label{sec:solutions}

\subsection{Free modes and temporal vortex states}
\label{subsec:free-modes}

We study the free solutions of Eq.~\eqref{eq:22CS-final}. With the separated ansatz
\begin{equation}
\widetilde\phi(\tau,\psi,r)=T(\tau)\Psi(\psi)R(r),
\label{eq:sep-ansatz-new}
\end{equation}
the angular equation gives
\begin{equation}
\Psi_n(\psi)=e^{in\psi},
\qquad n\in\mathbb Z.
\label{eq:angular-modes-new}
\end{equation}
A general angular solution is obtained from linear combinations of the modes
$e^{in\psi}$ and $e^{-in\psi}$, or equivalently from the real basis
$\cos(n\psi)$ and $\sin(n\psi)$. The radial spatial equation,
\begin{equation}
i\hbar\left(\partial_rR+\frac{R}{2r}\right)=pR,
\qquad p\geq0,
\label{eq:radial-eq}
\end{equation}
gives
\begin{equation}
R(r)=C_r r^{-1/2}\exp\left(-\frac{ipr}{\hbar}\right).
\label{eq:R-free-solution-new}
\end{equation}
For $p<0$ the temporal radial equation for $T(\tau) $ produces modified Bessel functions
$I_{|n|}$ and $K_{|n|}$, describing an evanescent temporal sector outside the
post-Carrollian momentum shell; the physical sector is $p\geq0$. With
$\lambda^2=2M_{\rm eff}p/\hbar^2$, the temporal radial equation is the Bessel
equation of order $|n|$ with solutions $J_{|n|}$ and $Y_{|n|}$, and regularity at $\tau=0$, as in Klein-space mode
analyses~\cite{ChenHuMao2025Klein}, selects
\begin{equation}
T_n^{\rm reg}(\tau)=J_{|n|}(\lambda\tau).
\label{eq:Treg-bessel}
\end{equation}
Combining the radial, angular, and temporal parts, a basis of regular separated modes is
\begin{equation}
\begin{aligned}
\widetilde\phi_{n,p}
&=\frac{\mathcal{N}_{n,p}}{\sqrt{r}}\,
e^{-ipr/\hbar}\,e^{in\psi}\,J_{|n|}(\lambda\tau),\\
\lambda&=\frac{\sqrt{2M_{\rm eff}p}}{\hbar},\qquad p\geq0 .
\end{aligned}
\label{eq:free-regular-mode}
\end{equation}

Each mode with $n\neq0$ behaves as a temporal vortex state. For the
rescaled field $\Phi=r^{1/2}\widetilde\phi$, the temporal probability current, as shown in Sec. \ref{sec:continuity}, is
\begin{subequations}
\begin{align}
J_\tau &= \frac{\hbar}{2iM_{\rm eff}}\left(\Phi^*\partial_\tau\Phi
-\Phi\,\partial_\tau\Phi^*\right),
\label{eq:Jtau}
\\
J_\psi &= \frac{\hbar}{2iM_{\rm eff}\tau}\left(\Phi^*\partial_\psi\Phi
-\Phi\,\partial_\psi\Phi^*\right).
\label{eq:Jpsi-general}
\end{align}
\end{subequations}
For the Bessel modes, $J_\tau=0$, while
\begin{equation}
J_\psi=\frac{\hbar n}{M_{\rm eff}\tau}\,|T_n(\tau)|^2 .
\label{eq:Jpsi}
\end{equation}
The associated temporal circulation is then quantized,
\begin{equation}
\Gamma_n=\oint v_\psi\,\tau\,d\psi
=\frac{2\pi\hbar n}{M_{\rm eff}},
\label{eq:temporal-vortex-circulation}
\end{equation}
where $v_\psi=J_\psi/|\Phi_n|^2=\hbar n/(M_{\rm eff}\tau)$. The circulation is
independent of $\tau$ and is a topologically invariant quantity. The temporal vorticity is
\( 
\nabla_t\times\mathbf v=2\pi\hbar n \delta^{(2)}(\mathbf t)/{M_{\rm eff}}
\), which is concentrated at the temporal origin, while the circulation remains finite and fixed by the winding number. For completeness, we note that the $n=0$ radial sector has deficiency indices $(1,1)$ and therefore admits a one-parameter family of self-adjoint extensions beyond the regular one. Correspondingly, it allows the usual logarithmic boundary behavior of the two-dimensional radial Laplacian near the origin,
\begin{equation}
T_0(\tau)\sim A+\frac{2B}{\pi}\ln\!\left(\frac{\tau}{\tau_0}\right),
\qquad \tau\to0^+ .
\label{eq:equation-for-tau}
\end{equation}
In this work we restrict to the regular sector $(B=0)$. Nonzero $B$ would correspond to a contact-type boundary condition at the temporal origin and would define a different self-adjoint extension of the $n=0$ problem.

\subsection{Canonical structure in Cartesian and radial temporal variables}
\label{subsec:ccr}

The canonical structure is simplest in Cartesian temporal coordinates
$t_a$, $a=1,2$. On the standard domain in $L^2(\mathbb{R}^2,d^2t)$ the canonical pair of the reduced theory is given by
\begin{equation}
\widehat E_a=-i\hbar\partial_{t_a},
\qquad
[\hat t_a,\widehat E_b]=i\hbar\delta_{ab}.
\label{eq:cartesian-temporal-ccr}
\end{equation}
The radial operator, in
direct analogy with the radial momentum operator in ordinary polar
coordinates~\cite{GilPaz2001},
\begin{equation}
\hat{E}_\tau=-i\hbar\!\left(\partial_\tau+\frac{1}{2\tau}\right),
\qquad
\hat{L}_\psi=-i\hbar\,\partial_\psi,
\label{eq:E-L-ops}
\end{equation}
is the polar representative of the Cartesian temporal momentum along
$\hat e_\tau$. On the common polar domain one has the formal relation
$[\hat\tau,\hat E_\tau]=i\hbar$. In the angular sector,
$\widehat L_\psi=-i\hbar\partial_\psi$ is self-adjoint on periodic functions
on $S^1$, and its spectrum is $n\hbar$, $n\in\mathbb Z$, where $n$ labels
the temporal angular momentum sectors.

\subsection{Gaussian wave packets and temporal Rayleigh range}
\label{subsec:gaussian}

The temporal spreading is more naturally described in terms of the rescaled field
\( 
\Phi(\mathbf{t},r)=\sqrt{r}\,\widetilde{\phi}(\mathbf{t},r),
\)
for which Eq.~\eqref{eq:22CS-final} becomes
\begin{equation}    
\frac{\hbar^2}{2M_{\rm eff}}\nabla_t^2\Phi+i\hbar\,\partial_r\Phi=0.
\label{eq:free-temporal-plane}
\end{equation}
In this form, \(\Phi\) may be treated as a free two-dimensional wave packet on the temporal plane. For an isotropic Gaussian initial state
\begin{equation}
\Phi_0(\mathbf{t})=\frac{1}{\sqrt{\pi}\,\sigma_0}
\exp\!\left(-\frac{|\mathbf{t}|^2}{2\sigma_0^2}\right),
\label{eq:gaussian-initial}
\end{equation}
normalized in \(L^2(\mathbb{R}^2,d^2t)\), the propagated state at \(r=r_0+\Delta r\) is
\begin{equation}
\Phi(\mathbf{t},r)=\frac{i\gamma/\beta}{\sqrt{\pi}\,\sigma_0}
\exp\!\left(-\frac{|\mathbf{t}|^2}{2\sigma(r)^2}\right),
\label{eq:gaussian-propagated}
\end{equation}
where
\begin{equation}
\sigma(r)^2=\sigma_0^2+\frac{i\hbar\,\Delta r}{M_{\rm eff}},
\quad
\gamma=\frac{M_{\rm eff}}{2\hbar\,\Delta r},
\quad
\beta=\frac{1}{2\sigma_0^2}-i\gamma.
\label{eq:gaussian-parameters}
\end{equation}
For the original ungauged envelope this gives
\begin{equation}
\widetilde{\phi}(\mathbf{t},r)=\frac{1}{\sqrt r}\,\Phi(\mathbf{t},r),
\qquad
|\widetilde{\phi}(\mathbf{t},r)|^2=\frac{1}{r}\,|\Phi(\mathbf{t},r)|^2 .
\label{eq:ecuacionjose}
\end{equation}
The temporal density of the rescaled field is
\begin{equation}
|\Phi(\mathbf{t},r)|^2=\frac{1}{\pi W(r)^2}
\exp\!\left(-\frac{|\mathbf{t}|^2}{W(r)^2}\right),
\label{eq:gaussian-density}
\end{equation}
with temporal width
\begin{equation}
W(r)^2=\sigma_0^2+\frac{\hbar^2(r-r_0)^2}{M_{\rm eff}^2\sigma_0^2}.
\label{eq:gaussian-width}
\end{equation}
Hence
\begin{equation}
\langle\tau^2\rangle(r)=W(r)^2=\sigma_0^2\!\left[1+
\left(\frac{r-r_0}{r_*}\right)^{\!2}\right],
\label{eq:temporal-spreading}
\end{equation}
where
\begin{equation}
r_*=\frac{M_{\rm eff}\sigma_0^2}{\hbar}=\frac{mc^3\sigma_0^2}{\hbar}
\label{eq:rayleigh-range}
\end{equation}
is the temporal Rayleigh range. For \(\Delta r\ll r_*\) the Gaussian is essentially undistorted, while for \(\Delta r\gg r_*\) one has \(W\sim\hbar\Delta r/(M_{\rm eff}\sigma_0)\). The Gaussian~\eqref{eq:gaussian-initial} is a minimum-uncertainty state for the Cartesian pairs \((t_a,E_a)\) at \(r=r_0\), and Eq.~\eqref{eq:temporal-spreading} shows that its temporal width increases with radial propagation.

\subsection{Temporal oscillator, \(SO(2)\times SU(1,1)\)
symmetry, and degeneracy}
\label{subsec:oscillator}

For the rescaled field $\widetilde\phi=r^{-1/2}\Phi$,
the free equation \eqref{eq:free-temporal-plane} takes the Schr\"odinger form
$i\hbar\partial_r\Phi=\hat P_0\Phi$, with
$\hat P_0=-(\hbar^2/2M_{\rm eff})\nabla_t^2$. A shift of $\hat E_\tau$ by a quadratic potential $\kappa\tau^2/2$ dresses the free Bessel modes by the phase $\exp(i\kappa\tau^3/6\hbar)$ without changing the spectrum. By contrast, a genuine oscillator arises through a momentum-sector coupling, consistent with the interaction-momentum viewpoint developed in the one-dimensional and many-body Carroll--Schr\"odinger settings~\cite{Rojas2025,RojasArias2025ManyBody},
\begin{equation}
i\hbar\partial_r\Phi=
\left[-\frac{\hbar^2}{2M_{\rm eff}}\nabla_t^2
+\frac12M_{\rm eff}\Omega^2\tau^2\right]\Phi .
\label{eq:oscillator-eq}
\end{equation}
The regular stationary modes $e^{-ip_{N,n}r/\hbar}e^{in\psi}T_{N,n}(\tau)$ are the two-dimensional isotropic oscillator modes
\begin{equation}
T_{N,n}(\tau)=\mathcal N_{N,n}\rho^{|n|/2}e^{-\rho/2}L_N^{|n|}(\rho),
\qquad
\rho=\frac{M_{\rm eff}\Omega}{\hbar}\tau^2,
\label{eq:oscillator-modes}
\end{equation}
with quantized radial momentum
\begin{equation}
p_{N,n}=\hbar\Omega(2N+|n|+1),
\qquad N=0,1,2,\ldots,\quad n\in\mathbb Z .
\label{eq:momentum-quantization}
\end{equation}
At principal level $k=2N+|n|$, the allowed values are $n=-k,-k+2,\ldots,k$, with $N=(k-|n|)/2\in\mathbb Z_{\ge0}$, giving degeneracy $k+1$. 

The operator organization is most transparent in Cartesian temporal variables. With $\widehat E_a=-i\hbar\partial_{t_a}$ and $\ell_\Omega=\sqrt{\hbar/(M_{\rm eff}\Omega)}$, define
\begin{equation}
a_a=\frac{1}{\sqrt2}\left(\frac{t_a}{\ell_\Omega}+i\frac{\ell_\Omega}{\hbar}\widehat E_a\right),
\qquad
[a_a,a_b^\dagger]=\delta_{ab} .
\label{eq:cartesian-ladders-compact}
\end{equation}
The polar-coordinate representation may be obtained by a change of variables. Then
\begin{equation}
\hat P_\Omega=\hbar\Omega(a_1^\dagger a_1+a_2^\dagger a_2+1).
\end{equation}
We introduce circular operators
\begin{equation}
a_+=\frac{1}{\sqrt2}(a_1-ia_2),
\qquad
a_-=\frac{1}{\sqrt2}(a_1+ia_2),
\label{eq:circular-ladders-compact}
\end{equation}
and number operators $N_\pm=a_\pm^\dagger a_\pm$.  One then has
\begin{equation}
\hat P_\Omega=\hbar\Omega(N_++N_-+1),
\qquad
\widehat L_\psi=\hbar(N_+-N_-).
\label{eq:circular-HL-compact}
\end{equation}
The relation with the radial labels is
\begin{equation}
\begin{gathered}
n=N_+-N_-,\qquad N=\min(N_+,N_-),\\
N_++N_-=2N+|n| .
\end{gathered}
\end{equation}
which reproduces Eq.~\eqref{eq:momentum-quantization}. The radial $SU(1,1)$ tower at fixed $n$ is generated by the bilinears
\begin{equation}
\begin{gathered}
K_+=a_+^\dagger a_-^\dagger,
\qquad
K_-=a_+a_-,\\
K_0=\frac12(N_++N_-+1).
\end{gathered}
\label{eq:su11-bilinears-compact}
\end{equation}
which satisfy $[K_0,K_\pm]=\pm K_\pm$ and $[K_+,K_-]=-2K_0$, where $K_+$ raises the radial quantum number $N$ while preserving temporal angular momentum $n$, and the lowest-weight representation at fixed $n$ has Bargmann index
\begin{equation}
k_n=\frac{|n|+1}{2} .
\label{eq:bargmann-index}
\end{equation}

\paragraph{Effective temporal Aharonov--Bohm flux.}
Within the same angular sector, one may introduce an effective flat \(U(1)\) connection on the punctured temporal configuration plane, so that the wave function acquires a holonomy phase around the puncture. The coupling is introduced through the shift
\begin{equation}
\widehat L_\psi\longrightarrow \widehat L_\psi-\alpha\hbar,
\qquad \alpha\in[0,1) .
\label{eq:AB-shift}
\end{equation}
The Bessel and oscillator indices are then shifted from $|n|$ to $|n-\alpha|$. In particular,
\begin{equation}
p_{N,n}(\alpha)=\hbar\Omega(2N+|n-\alpha|+1),
\qquad n\in\mathbb Z .
\label{eq:AB-spectrum}
\end{equation}
For generic $\alpha$ this lifts the integer-flux oscillator degeneracies; in $0<\alpha<1$ a residual doublet occurs only at $\alpha=1/2$, where $n=0$ and $n=1$ have the same $|n-\alpha|$.

\subsection{Temporal Gouy phase and metaplectic grading of vortex sectors}
\label{subsec:gouy}

The free temporal spreading of Sec.~\ref{subsec:gaussian} and the
oscillator spectrum of Sec.~\ref{subsec:oscillator} are controlled by the
same metaplectic weight. In the normalizable basis adapted to the Gaussian scale \(\sigma_0\),
with Rayleigh range \(r_*=M_{\rm eff}\sigma_0^2/\hbar\) and
\(\Delta r=r-r_0\), the solutions of the free equation
\eqref{eq:free-temporal-plane} may also be written as temporal
Laguerre--Gauss modes
\begin{multline}
\Phi_{N,n}(\mathbf t,r)=
\frac{\mathcal C_{N,n}}{W(r)}
\left(\frac{\tau}{W(r)}\right)^{|n|}
L_N^{|n|}\!\left(\frac{\tau^2}{W(r)^2}\right)
e^{in\psi}
\\
\times \exp\!\left(-\frac{\tau^2}{2\sigma(r)^2}\right)
\exp\!\left[-i(2N+|n|+1)\zeta(r)\right],
\label{eq:temporal-LG}
\end{multline}
with \(\sigma(r)^2\) and \(W(r)^2\) as in
Eqs.~\eqref{eq:gaussian-parameters} and \eqref{eq:gaussian-width}, and
Gouy phase
\begin{equation}
\zeta(r)=\arctan\!\left(\frac{r-r_0}{r_*}\right).
\label{eq:gouy-phase}
\end{equation}
The fundamental \(N=n=0\) mode is the Gaussian of
Sec.~\ref{subsec:gaussian}, whose axial phase relative to the focus is
\(-\zeta(r)\).

Across the temporal focus, \(\zeta(r)\) runs from \(-\pi/2\) to \(+\pi/2\),
so the total Gouy shift is \(\pi\). The phase is graded by the temporal
angular momentum,
\begin{equation}
\Theta_{N,n}(r)=(2N+|n|+1)\,\zeta(r),
\label{eq:graded-gouy}
\end{equation}
so vortex sectors of different \(|n|\) dephase relative to one another as
the state propagates in \(r\).

The same integer weight also appears in the confined problem. At fixed
oscillator scale \(\ell_\Omega=\sqrt{\hbar/(M_{\rm eff}\Omega)}\), the
temporal oscillator \eqref{eq:oscillator-eq} is diagonalized by
Laguerre--Gauss profiles with radial and angular labels \((N,n)\), which
evolve with the linear phase
\(\exp[-i\,p_{N,n}\Delta r/\hbar]
=\exp[-i\,\Omega(2N+|n|+1)\Delta r]\).
Hence
\begin{equation}
\frac{p_{N,n}}{\hbar\Omega}=2N+|n|+1,
\label{eq:gouy-spectrum-identity}
\end{equation}
so the same weight \(2N+|n|+1\) governs both the free Gouy phase and the
oscillator spectrum. In the free problem it appears through the running
\(\zeta(r)\), whereas in the oscillator it appears through the linear phase
\(\Omega\Delta r\). The discrete spectrum
\eqref{eq:momentum-quantization} and the geometric phase of the freely
spreading packet are therefore controlled by the same weight
\(2N+|n|+1\), realized on two one-parameter subgroups of
\(Mp(4,\mathbb R)\). In this language the half-integer offsets of the
metaplectic representation may be viewed as Maslov-type contributions
associated with the temporal focus, consistent with the total Gouy shift
\(\pi\) on the two-dimensional plane.

This grading can be seen in a two-mode superposition. For the equal-\(r\)
state \(\Phi=c_0\Phi_{0,0}+c_n\Phi_{0,n}\) prepared at the focus, free
radial propagation produces the relative phase
\begin{equation}
\delta\Theta(r)=|n|\,\arctan\!\left(\frac{r-r_0}{r_*}\right),
\label{eq:two-mode-dephasing}
\end{equation}
which saturates at \(|n|\pi/2\) in the far temporal field. Since
\(\Phi_{0,n}\) carries azimuthal index \(n\), this dephasing appears as a
rigid rotation of the temporal interference pattern by the angle
\begin{equation}
\Delta\psi(r)=\frac{\delta\Theta(r)}{n}
=\mathrm{sgn}(n)\arctan\!\left(\frac{r-r_0}{r_*}\right),
\qquad n\neq0,
\label{eq:lobe-rotation}
\end{equation}
which grows monotonically with the radial evolution parameter and depends on
the geometry only through \(r_*\). This effect has no counterpart in the
\(1+1\) Carroll--Schr\"odinger theory, where there is no temporal angular
momentum to grade the phase.

\subsection{Effective temporal connection sector and Fock--Darwin-type radial spectrum}
\label{subsec:temporal-magnetic}

The two-dimensional temporal plane also permits an effective connection with nonzero curvature. We use this sector only as a planar model coupling on the reduced configuration space. Let
\begin{equation}
\boldsymbol{\Pi}_t=-i\hbar\nabla_t-\mathbf A_t(t_1,t_2),
\qquad
\mathcal B_t=\partial_{t_1}A_{t_2}-\partial_{t_2}A_{t_1}.
\end{equation}
Here \(\mathcal B_t\) denotes the planar curvature of the effective connection. It should not be identified with the magnetic induction of ordinary electromagnetism. A first-principles electromagnetic treatment would require starting from a Carrollian electrodynamics sector~\cite{deBoerStories2023,PhysRevD.108.046019} and extending that construction to the post-Carrollian regime considered here. The radial rescaled equation
\begin{equation}
i\hbar\partial_r\Phi=\frac{1}{2M_{\rm eff}}\boldsymbol{\Pi}_t^2\Phi
\label{eq:temporal-landau-eq}
\end{equation}
defines a Landau-type sector. For a uniform temporal curvature, choose the symmetric gauge
\begin{equation}
\mathbf A_t=\frac{\mathcal B_t}{2}(-t_2,t_1),
\qquad
\omega_t=\frac{|\mathcal B_t|}{M_{\rm eff}}.
\end{equation}
Then Eq.~\eqref{eq:temporal-landau-eq} has Landau-type radial momenta
\begin{equation}
p_N^{\rm L}=\hbar\omega_t\left(N+\frac12\right),
\qquad N=0,1,2,\ldots,
\label{eq:temporal-landau-levels}
\end{equation}
with degeneracy controlled by the temporal area in the usual planar
way. In the same sector one may define guiding-center coordinates
whose commutator is
\begin{equation}
[R_1,R_2]=i\ell_t^2,
\qquad
\ell_t^2=\frac{\hbar}{|\mathcal B_t|}.
\label{eq:guiding-center}
\end{equation}
Combining the uniform temporal curvature with the oscillator gives the
Fock--Darwin Hamiltonian
\begin{equation}
i\hbar\partial_r\Phi=\left[\frac{\boldsymbol{\Pi}_t^2}{2M_{\rm eff}}
+\frac12M_{\rm eff}\Omega^2\tau^2\right]\Phi.
\label{eq:fock-darwin-temporal}
\end{equation}
It diagonalizes into two chiral oscillators with frequencies
\begin{equation}
\Omega_\pm=\sqrt{\Omega^2+\frac{\omega_t^2}{4}}\pm\frac{\omega_t}{2},
\label{eq:Omega-pm}
\end{equation}
and spectrum
\begin{equation}
\begin{aligned}
p_{n_+,n_-}
&=\hbar\Omega_+\left(n_++\frac12\right)
+\hbar\Omega_-\left(n_-+\frac12\right),\\
n_\pm&=0,1,2,\ldots .
\end{aligned}
\label{eq:fock-darwin-spectrum}
\end{equation}
For $\mathcal B_t\to0$ this reduces to the isotropic temporal oscillator;
for $\Omega\to0$ one chiral frequency vanishes and the Landau degeneracy
is recovered.

\subsection{Two-body relative sector and anyonic temporal holonomy}
\label{subsec:anyons}

The single-particle temporal Aharonov--Bohm sector of Eq.~\eqref{eq:AB-shift}
has a two-body analogue. For two excitations the relative temporal coordinate after the coincidence point is removed
is \(\mathbb R^2\setminus\{0\}\). A flat holonomy around this puncture then gives
a family of anyonic boundary conditions for the relative wave function, which is
how the second temporal direction enters the two-body sector.

Consider two identical post-Carrollian excitations described, in the many-body
extension of the reduced theory~\cite{RojasArias2025ManyBody}, by the free
two-body equation for the rescaled field \(\Phi(\mathbf t_1,\mathbf t_2,r)\),
\begin{equation}
i\hbar\,\partial_r\Phi
=
-\frac{\hbar^2}{2M_{\rm eff}}
\left(
\nabla_{t_1}^2+\nabla_{t_2}^2
\right)\Phi .
\label{eq:two-body-free}
\end{equation}
Introducing temporal center-of-mass and relative coordinates
\(\mathbf T=(\mathbf t_1+\mathbf t_2)/2\) and
\(\mathbf t=\mathbf t_1-\mathbf t_2\), one obtains
\begin{equation}
\nabla_{t_1}^2+\nabla_{t_2}^2
=
\frac{1}{2}\nabla_T^2+2\nabla_t^2 .
\label{eq:two-body-laplacian-split}
\end{equation}
With \(\Phi=\Phi_{\rm cm}(\mathbf T,r)\chi(\mathbf t,r)\), the relative motion
obeys
\begin{equation}
i\hbar\,\partial_r\chi
=
-\frac{\hbar^2}{2\mu_{\rm rel}}\nabla_t^2\chi,
\qquad
\mu_{\rm rel}=\frac{M_{\rm eff}}{2},
\label{eq:relative-eq}
\end{equation}
on \(\mathbf t\in\mathbb R^2\setminus\{0\}\). The removal of the coincidence
point is the usual idealization behind pointlike anyonic quantization, in which a microscopic interaction or core regularization would select a definite
self-adjoint extension.

\begin{proposition}[Anyonic temporal holonomy]
\label{prop:anyons}
The relative sector \eqref{eq:relative-eq} admits a \(U(1)\) family of
inequivalent boundary conditions labeled by \(\alpha\in[0,2)\), implemented by
the flat temporal Aharonov--Bohm connection
\(\mathbf a=\alpha\hbar\nabla\psi\) on the relative angle. In a confining
temporal oscillator of relative frequency \(\Omega_{\rm rel}\), the relative
radial-momentum spectrum is
\begin{equation}
\begin{aligned}
p^{\rm rel}_{N,m}(\alpha)
&=
\hbar\Omega_{\rm rel}
\bigl(2N+|m-\alpha|+1\bigr),\\
N&\in\mathbb Z_{\ge0},
\qquad
m\in\mathbb Z .
\end{aligned}
\label{eq:anyon-spectrum}
\end{equation}
A full loop around the coincidence point gives the monodromy
\(e^{2\pi i\alpha}\). The exchange path corresponds to the half-rotation
\(\mathbf t\to-\mathbf t\) and gives the phase
\begin{equation}
\mathcal E(\alpha)=e^{i\pi\alpha}.
\label{eq:exchange-phase}
\end{equation}
The case \(\alpha=0\) is bosonic, while \(\alpha\equiv1\pmod{2}\) gives the
fermionic exchange sign; generic \(\alpha\) gives an anyon-like temporal
holonomy.
\end{proposition}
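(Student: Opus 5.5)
The plan is to reduce everything to the single-particle constructions already in place, namely the Aharonov--Bohm shift \eqref{eq:AB-shift} and the oscillator spectrum \eqref{eq:momentum-quantization}, now applied to the relative problem \eqref{eq:relative-eq} with $M_{\rm eff}\to\mu_{\rm rel}$.

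\textbf{Step 1: the family of boundary conditions.} Work in polar relative coordinates $\mathbf t=(\tau\cos\psi,\tau\sin\psi)$ on $\mathbb R^2\setminus\{0\}$. The punctured plane has $\pi_1=\mathbb Z$, so its flat $U(1)$ bundles, or equivalently its twisted boundary conditions $\chi(\tau,\psi+2\pi)=e^{2\pi i\alpha}\chi(\tau,\psi)$, are labeled by the holonomy $e^{2\pi i\alpha}$. To put the coupling in the form stated in the proposition, I would perform the singular gauge transformation $\chi=e^{i\alpha\psi}\chi_{\rm per}$. This maps the twisted problem to periodic functions with $\widehat L_\psi\to\widehat L_\psi+\alpha\hbar$, that is, minimal coupling to $\mathbf a=\alpha\hbar\nabla\psi$. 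This connection has zero curvature away from the origin and circulation $2\pi\alpha\hbar$.

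Which range of $\alpha$ counts as inequivalent depends on which operations are allowed:
\begin{itemize}
\item Under large gauge transformations $e^{ik\psi}$, two values of $\alpha$ are equivalent iff they differ by an integer. The spectrum \eqref{eq:anyon-spectrum} is indeed invariant under $\alpha\to\alpha+1$, $m\to m+1$.
\item The exchange phase, however, is $e^{i\pi\alpha}$. Read as a statistics parameter, it therefore distinguishes $\alpha$ from $\alpha+1$, since the two give the bosonic and fermionic signs.
\end{itemize}
I would justify the range $[0,2)$ by this exchange reading: the relevant loop generator is the half-rotation. This identification step is the main subtlety, and I would state it explicitly rather than attempt to derive it.

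I would also note the following, and adopt it as the idealization announced before the proposition:
\begin{itemize}
\item For non-integer $\alpha$ each sector $|m-\alpha|\ge 1$ is limit-point at $\tau=0$.
\item The sectors with $|m-\alpha|<1$ admit extensions.
\item Choosing the regular (Friedrichs) extension fixes a unique self-adjoint operator for each $\alpha$.
\end{itemize}

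\textbf{Step 2: the spectrum.} Add $\tfrac12\mu_{\rm rel}\Omega_{\rm rel}^2\tau^2$ and separate as $\chi=e^{-ipr/\hbar}e^{im\psi}T(\tau)$ in the periodic gauge. The radial equation is then the two-dimensional oscillator equation with centrifugal index $\nu=|m-\alpha|$. With $\rho=\mu_{\rm rel}\Omega_{\rm rel}\tau^2/\hbar$ and the substitution $T=\rho^{\nu/2}e^{-\rho/2}w(\rho)$, one obtains Kummer's equation for $w$. Regularity at $\rho=0$ selects $M(a,\nu+1,\rho)$. Normalizability at infinity forces $a=-N$, giving $w=L_N^{\nu}(\rho)$ and $p=\hbar\Omega_{\rm rel}(2N+\nu+1)$. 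This is exactly the computation behind \eqref{eq:oscillator-modes}--\eqref{eq:AB-spectrum} with $|n|\to|m-\alpha|$, so I would cite it.

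\textbf{Step 3: monodromy and exchange.} Transform back to the multivalued gauge, $\chi=e^{i\alpha\psi}\chi_{\rm per}$. Transport along $\psi\to\psi+2\pi$ then multiplies the wave function by $e^{2\pi i\alpha}$. Exchanging $\mathbf t_1\leftrightarrow\mathbf t_2$ sends $\mathbf t\to-\mathbf t$, which is $\psi\to\psi+\pi$ at fixed $\tau$; realized as a counterclockwise half-loop, it yields $e^{i\pi\alpha}$. The center-of-mass factor $\Phi_{\rm cm}(\mathbf T)$ is exchange-symmetric and does not contribute. Finally, $\alpha=0$ gives $+1$ and $\alpha=1$ gives $-1$, recovering bosons and fermions; for $\alpha=1$ the spectrum reduces to the ordinary one with $m$ shifted by one, consistent with the fermionic antisymmetry selecting odd relative angular momentum.
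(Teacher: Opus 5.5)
Your proposal follows the paper's proof essentially step for step: $\pi_1(\mathbb R^2\setminus\{0\})=\mathbb Z$, the flat connection $\alpha\hbar\nabla\psi$ shifting $\widehat L_\psi$ by $\alpha\hbar$, the oscillator spectrum via $|m|\to|m-\alpha|$, and the half-loop $\psi\to\psi+\pi$ for $e^{i\pi\alpha}$; you also supply the Kummer/Laguerre reduction and the Friedrichs choice for the $|m-\alpha|<1$ sectors, which the paper's proof leaves implicit (with $\chi=e^{i\alpha\psi}\chi_{\rm per}$ your index is actually $|m+\alpha|$, which matches the stated form only after $m\to-m$). The subtlety you flag is real and is glossed over in the paper as well: holonomies of the punctured plane fix $\alpha$ only mod $1$, and a genuine $[0,2)$ label arises only on the identical-particle space $(\mathbb R^2\setminus\{0\})/\{\pm1\}$, whose $\pi_1$ is generated by the exchange half-loop and which restricts $m$ to one parity class, so with $m\in\mathbb Z$ your closing remark about odd relative angular momentum does not follow, since the $\alpha=1$ spectrum is then just the $\alpha=0$ set relabeled.
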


\begin{proof}
Writing \(\mathbf t=\tau(\cos\psi,\sin\psi)\), the relative configuration space
is the punctured plane, with
\begin{equation}
\pi_1(\mathbb R^2\setminus\{0\})=\mathbb Z .
\label{eq:punctured-plane-pi1}
\end{equation}
Its one-dimensional unitary representations are labeled by the holonomy
\(e^{2\pi i\alpha}\) around the puncture~\cite{LeinaasMyrheim1977,Wilczek1982}.
The flat connection \(\mathbf a=\alpha\hbar\nabla\psi\) realizes this holonomy
by shifting the relative angular momentum as
\begin{equation}
\widehat L_\psi\longrightarrow \widehat L_\psi-\alpha\hbar .
\label{eq:anyon-angular-shift}
\end{equation}
The half-loop \(\psi\to\psi+\pi\) gives Eq.~\eqref{eq:exchange-phase}, and the
oscillator spectrum follows from the corresponding replacement
\(|m|\to|m-\alpha|\).
\end{proof}

\section{Continuity equation and equal-\texorpdfstring{$r$}{r} norm}
\label{sec:continuity}

Multiplying Eq.~\eqref{eq:22CS-final} by $\widetilde\phi^*$,
subtracting the complex conjugate, and rearranging gives
\begin{equation}
\frac{1}{r}\partial_r(r|\widetilde\phi|^2)+\nabla_t\cdot\mathbf{J}_{\widetilde\phi}=0,
\label{eq:continuity-tilde}
\end{equation}
with temporal current
$\mathbf{J}_{\widetilde\phi}=J_\tau\hat e_\tau+J_\psi\hat e_\psi$
given by:
\begin{subequations}
\begin{align}
J_\tau&=\frac{\hbar}{2iM_{\rm eff}}\!
\left(\widetilde\phi^*\partial_\tau\widetilde\phi
-\widetilde\phi\,\partial_\tau\widetilde\phi^*\right),
\\
J_\psi&=\frac{\hbar}{2iM_{\rm eff}\tau}\!
\left(\widetilde\phi^*\partial_\psi\widetilde\phi
-\widetilde\phi\,\partial_\psi\widetilde\phi^*\right).
\end{align}
\end{subequations}
Under $\widetilde\phi=r^{-1/2}\Phi$, the identity $r|\widetilde\phi|^2=|\Phi|^2$
eliminates the polar measure term and gives
$\partial_r|\Phi|^2+\nabla_t\cdot\mathbf{J}_\Phi=0$.
Integrating over the temporal measure
$d\Sigma_t=\tau\,d\tau\,d\psi$, the equal-$r$ norm
\begin{equation}
\mathcal{N}(r)=\int_0^\infty\!\tau\,d\tau
\int_0^{2\pi}\!d\psi\,|\Phi(\tau,\psi,r)|^2,
\qquad\frac{d\mathcal{N}}{dr}=0,
\label{eq:conserved-norm}
\end{equation}
is conserved.  The Hilbert space is
$L^2(\mathbb{R}^+\times S^1,\tau\,d\tau\,d\psi)\simeq L^2(\mathbb{R}^2,d^2t)$.
For the original ungauged field the same conserved quantity is written as
\begin{equation}
\widetilde{\mathcal N}(r)=r\int_0^\infty\!\tau\,d\tau\int_0^{2\pi}\!d\psi\,
|\widetilde\phi(\tau,\psi,r)|^2=\mathcal N(r).
\label{eq:conserved-norm-tilde}
\end{equation}
The general regular state admits the Bessel--Fourier spectral resolution, which makes the radial unitarity explicit:
\begin{align}
\Phi(r,\tau,\psi)
={}&\sum_{n\in\mathbb{Z}}\int_0^\infty\!d\lambda\,
a_n(\lambda)\nonumber\\
&\times e^{-i\hbar\lambda^2 r/(2M_{\rm eff})}
 e^{in\psi}\,J_{|n|}(\lambda\tau),
\label{eq:spectral-resolution}
\end{align}
where the Bessel measure is absorbed in $a_n(\lambda)$.  
More generally, if X is an anti-Hermitian symmetry of the equal-r equation, preserving the chosen domain, then \(-i\hbar X\)      defines the conserved Hermitian charge
\begin{equation}
Q_X(r)=\int_{\mathbb R^2}d^2t \Phi^\dagger(-i\hbar X)\Phi .
\label{eq:noether-charge}
\end{equation}

\section{Symmetry algebra}
\label{sec:symmetry}

Starting from the gauge-fixed reduced equation, in close analogy with the one-dimensional Carroll--Schr\"odinger case~\cite{Najafizadeh2025-2},
\begin{equation}
\left(i\hbar\partial_r+\frac{\hbar^2}{2M_{\rm eff}}\Delta_t\right)\Phi=0,
\qquad
\Delta_t=\partial_{t_1}^2+\partial_{t_2}^2,
\label{eq:gauge-fixed-symmetry-start}
\end{equation}

one obtains the generators in Cartesian coordinates
\begin{subequations}
\begin{align}
P&=\partial_r,\quad
H_a=\partial_{t_a},\quad
J=t_1\partial_{t_2}-t_2\partial_{t_1},\quad
M=-i\frac{M_{\rm eff}}{\hbar},
\\
B_a&=r\partial_{t_a}-i\frac{M_{\rm eff}}{\hbar}\,t_a,\quad
D=2r\partial_r+t_a\partial_{t_a}+1,
\\
C&=r^2\partial_r+r\,t_a\partial_{t_a}+r
-\frac{iM_{\rm eff}}{2\hbar}(t_1^2+t_2^2),
\end{align}
\end{subequations}

with $a=1,2$ and repeated temporal indices summed. Their non-vanishing
Lie brackets are
\begin{subequations}
\begin{align}
[H_a,B_b]&=\delta_{ab}M, & [P,B_a]&=H_a,
\label{eq:comm-1}
\\
[J,H_a]&=-\epsilon_{ab}H_b, & [J,B_a]&=-\epsilon_{ab}B_b,
\label{eq:comm-2}
\\
[H_a,D]&=H_a, & [P,D]&=2P,
\label{eq:comm-3}
\\
[D,B_a]&=B_a, & [P,C]&=D,
\label{eq:comm-4}
\\
[D,C]&=2C, & [H_a,C]&=B_a,
\label{eq:comm-5}
\end{align}
\end{subequations}
where $\epsilon_{12}=1$.

This algebra is $\mathfrak{sch}(2)$, the Schr\"odinger algebra in two
spatial dimensions: $H_a$ are temporal translations, $B_a$ are
Carroll boosts, $J$ is temporal rotation, $P$ is evolution translation,
$D$ is dilatation, $C$ is special conformal, and $M$ is the central
extension.  The algebra has dimension 9; all Jacobi identities hold
by direct computation.  Setting $a=1$ and discarding $J$, $H_2$, $B_2$
recovers $\mathfrak{sch}(1)$ of Ref.~\cite{Najafizadeh2025-2}.

In the physically motivated temporal polar coordinates $t_a=(\tau\cos\psi,\tau\sin\psi)$,
$J=\partial_\psi$ and $D=2r\partial_r+\tau\partial_\tau+1$.
The symmetry generators of the original field $\widetilde\phi=r^{-1/2}\Phi$
are obtained by the similarity transformation
$\widetilde X=r^{-1/2}Xr^{1/2}$, shifting only the $r$-derivative
generators,
$\widetilde P=\partial_r+1/(2r)$,
$\widetilde D=2r\partial_r+\tau\partial_\tau+2$,
$\widetilde C=r^2\partial_r+r\tau\partial_\tau+3r/2-iM_{\rm eff}\tau^2/(2\hbar)$,
while $\widetilde H_a=H_a$, $\widetilde B_a=B_a$, $\widetilde J=J$,
$\widetilde M=M$.  Commutators are covariant under similarity
transformations, so the ungauged generators satisfy the same
$\mathfrak{sch}(2)$ algebra.

\subsection{Bargmann geometric interpretation}
\label{subsec:bargmann}

The $\mathfrak{sch}(2)$ symmetry and the central charge $M=-iM_{\rm eff}/\hbar$
can be realized geometrically through a Bargmann lift. Introduce the four-dimensional Bargmann space
with coordinates $(r,t_1,t_2,s)$ and null metric
\begin{equation}
d\Sigma^2=dt_1^2+dt_2^2+2\,dr\,ds.
\label{eq:bargmann-metric}
\end{equation}
The ansatz $\Psi(r,t_1,t_2,s)=e^{iM_{\rm eff}s/\hbar}\Phi(r,t_1,t_2)$
reduces the massless wave equation
$\Box_\Sigma\Psi=0$  with $\Box_\Sigma=\partial_{t_1}^2+\partial_{t_2}^2+2\partial_r\partial_s$ 
to
\begin{equation}
\Delta_t\Phi+\frac{2iM_{\rm eff}}{\hbar}\,\partial_r\Phi=0,
\label{eq:bargmann-reduction}
\end{equation}
which is exactly the gauge-fixed Carroll--Schr\"odinger equation.
The coordinate $s$ is the Bargmann fiber. Translation along this fiber,
$\partial_s$, is the central generator in the lifted Bargmann geometry. In the
fixed-mass sector, $\partial_s$ acts on the lifted field as $iM_{\rm eff}/\hbar$,
which is represented on the reduced wavefunction by the central charge
$M=-iM_{\rm eff}/\hbar$.

The lifted translations $H_a$, the rotation $J$, the radial translation $P$, the Galilean-type boosts $B_a$, and the fiber translation $\partial_s$ are Killing vector fields of the flat Bargmann metric. The dilation $D$ and special conformal generator $C$ are instead conformal Bargmann vector fields which preserve the Bargmann structure, and the fixed-mass wave equation, rather than the metric as ordinary isometries. With this distinction, the lifted brackets close into $\mathfrak{sch}(2)$, as in the standard Bargmann realization of Schr\"odinger symmetry~\cite{Duval2014}.
The Carroll--Schr\"odinger theory in Klein space identifies the Bargmann
``time'' with the spatial radius $r$, the Bargmann ``space'' with the
temporal two-plane $(t_1,t_2)$, and the Bargmann fiber with the
  post-Carrollian mass parameter $M_{\rm eff}=mc^3$.

The scalar momentum-coupled interaction $F(r,\tau,\psi)$ enters through the
Eisenhart--Bargmann deformation
\begin{equation}
d\Sigma_F^2=dt_1^2+dt_2^2+2\,dr\,ds-\frac{2F(r,t_1,t_2)}{M_{\rm eff}}\,dr^2.
\label{eq:eisenhart-metric}
\end{equation}
With the same fixed-mass ansatz, $\Box_{\Sigma_F}\Psi=0$ gives
$i\hbar\partial_r\Phi=[-\hbar^2\Delta_t/(2M_{\rm eff})+F]\Phi$. 

\subsection{Metaplectic structure of the temporal plane}
\label{subsec:metaplectic}

The quadratic sectors of the radial Carroll--Schr\"odinger equation
admit a common symplectic description.  Let
\begin{equation}
Z=(t_1,t_2,E_1,E_2)^T,
\qquad
E_a=-i\hbar\partial_{t_a},
\end{equation}
and introduce the standard symplectic matrix
\begin{equation}
\mathbb J=
\begin{pmatrix}
0&I_2\\
-I_2&0
\end{pmatrix}.
\end{equation}
For a quadratic Hamiltonian on the temporal phase space,
\begin{equation}
H(r)=\frac12 Z^T G(r)Z,
\label{eq:quadratic-temporal-H}
\end{equation}
the associated classical equations are linear,
\begin{equation}
\frac{dZ}{dr}=\mathbb JG(r)Z.
\label{eq:symplectic-flow}
\end{equation}
Thus
\begin{equation}
Z(r)=S(r,r_0)Z(r_0),
\qquad
S(r,r_0)\in Sp(4,\mathbb R).
\label{eq:sp4-flow}
\end{equation}
The quantum radial evolution is the metaplectic lift of this symplectic
map.  We use the Heisenberg convention
\begin{equation}
U_S^\dagger ZU_S=S(r,r_0)Z,
\qquad
U_S\in Mp(4,\mathbb R).
\label{eq:metaplectic-lift}
\end{equation}
Equivalently, if the opposite state-action convention is used, the same
relation is written with $S^{-1}$.

For the free Hamiltonian,
\begin{equation}
H_0=\frac{E_1^2+E_2^2}{2M_{\rm eff}},
\end{equation}
the symplectic matrix is
\begin{equation}
S_{\rm free}(\Delta r)=
\begin{pmatrix}
I_2&\Delta r\,I_2/M_{\rm eff}\\
0&I_2
\end{pmatrix}.
\label{eq:S-free}
\end{equation}
Its metaplectic lift is precisely the propagator of
Sec.~\ref{sec:propagator}.  The Gaussian spreading law in
Sec.~\ref{subsec:gaussian} is the image of the covariance matrix of a
temporal Gaussian under the linear transformation
\eqref{eq:S-free}.  The temporal Rayleigh range
$r_*=M_{\rm eff}\sigma_0^2/\hbar$ is therefore the diffraction length
of a metaplectic shear on the temporal plane.

The temporal oscillator corresponds to a symplectic rotation.  The
Hamiltonian
\begin{equation}
H_\Omega=\frac{E_1^2+E_2^2}{2M_{\rm eff}}
+\frac12M_{\rm eff}\Omega^2(t_1^2+t_2^2)
\end{equation}
generates the block rotation
\begin{equation}
S_\Omega(\Delta r)=
\begin{pmatrix}
\cos(\Omega\Delta r)I_2&
\dfrac{\sin(\Omega\Delta r)}{M_{\rm eff}\Omega}I_2\\
-M_{\rm eff}\Omega\sin(\Omega\Delta r)I_2&
\cos(\Omega\Delta r)I_2
\end{pmatrix}.
\label{eq:S-oscillator}
\end{equation}
The circular creation and annihilation operators introduced in
Sec.~\ref{subsec:oscillator} are the diagonal oscillator coordinates
for this metaplectic rotation.  The radial $SU(1,1)$ generators are the
quadratic, angular-momentum-preserving part of the same metaplectic
representation.

Within the effective temporal-connection model, the Landau-type and Fock--Darwin-type sectors are also quadratic. A
uniform temporal connection adds an antisymmetric part to the temporal
phase-space matrix $G(r)$.  Equivalently, the classical flow splits into
two chiral metaplectic rotations with frequencies
\begin{equation}
\Omega_\pm=\sqrt{\Omega^2+\frac{\omega_t^2}{4}}
\pm\frac{\omega_t}{2},
\end{equation}
as in the Fock--Darwin spectrum. Thus the same metaplectic language also organizes the effective quadratic connection sectors.

\section{Radial quantization and radial-ordered propagator}
\label{sec:propagator}

A first-order action yielding the gauge-fixed equation \eqref{eq:gauge-fixed-symmetry-start} is
\begin{equation}
S_\Phi=\int\!dr\,d^2t
\!\left[i\hbar\Phi^\dagger\partial_r\Phi
-\frac{\hbar^2}{2M_{\rm eff}}\partial_a\Phi^\dagger\partial_a\Phi\right].
\label{eq:action}
\end{equation}
Equal-$r$ canonical quantization gives
$[\widehat\Phi(r,\mathbf{t}),\widehat\Phi^\dagger(r,\mathbf{t}')]=\delta^{(2)}(\mathbf{t}-\mathbf{t}')$.
For the original field \eqref{eq:22CS-final},
$[\widehat{\widetilde\phi},\widehat{\widetilde\phi}^\dagger]=r^{-1}\delta^{(2)}(\mathbf{t}-\mathbf{t}')$.

The retarded (radial-ordered) Green kernel,
$\mathcal{S}_xG_{\rm ret}^\Phi=i\hbar\delta(r-r')\delta^{(2)}(\mathbf{t}-\mathbf{t}')$
with $G_{\rm ret}^\Phi=0$ for $r<r'$, is the exact two-dimensional
Schr\"odinger propagator,
\begin{equation}
G_{\rm ret}^\Phi(r,\mathbf{t};r',\mathbf{t}')
=\Theta(\Delta r)\frac{M_{\rm eff}}{2\pi i\hbar\,\Delta r}
\exp\!\left[\frac{iM_{\rm eff}|\mathbf{t}-\mathbf{t}'|^2}{2\hbar\,\Delta r}\right].
\label{eq:propagator}
\end{equation}
Now returning to the original ungauged field,
$G_{\rm ret}^{\widetilde\phi}=\sqrt{r'/r}\,G_{\rm ret}^\Phi$.
In polar temporal coordinates, using
$|\mathbf{t}-\mathbf{t}'|^2=\tau^2+\tau'^2-2\tau\tau'\cos(\psi-\psi')$,
the Jacobi--Anger expansion gives
\begin{align}
G_{\rm ret}^{\widetilde\phi}
={}&\Theta(\Delta r)\sqrt{\frac{r'}{r}}\,
\frac{M_{\rm eff}}{2\pi i\hbar\,\Delta r}\,
e^{i\gamma(\tau^2+\tau'^2)}
\nonumber\\
&\times\sum_{\ell\in\mathbb{Z}}(-i)^\ell J_\ell(z)\,e^{i\ell(\psi-\psi')},
\label{eq:propagator-polar}
\end{align}
where $\gamma=M_{\rm eff}/(2\hbar\,\Delta r)$ and
$z=M_{\rm eff}\tau\tau'/(\hbar\,\Delta r)$.  
Because the reduced equation is Schr\"odinger-like in the radial variable \(r\),
the radial-ordered prescription defines the natural radial initial-value problem even after returning to the original field. The
condition \(G_{\rm ret}=0\) for \(r<r'\) means that data on an equal-\(r\)
slice propagate only toward larger \(r\). For fixed \(\Delta r>0\), the kernel
is nonzero at arbitrary separation in the temporal plane, as expected for
Schr\"odinger-type spreading on the configuration space.

\section{Hamilton--Jacobi limit}
\label{sec:classical-limit}

Starting from the Carroll--Schr\"odinger equation~\eqref{eq:22CS-final}, substituting the WKB ansatz $\widetilde\phi=\mathcal Ae^{iS/\hbar},$ and taking the classical limit by keeping the terms of order $\hbar^0$, we obtain
\begin{equation}
\partial_rS+\frac{1}{2M_{\rm eff}}
\left[(\partial_\tau S)^2+\frac{(\partial_\psi S)^2}{\tau^2}\right]=0.
\label{eq:hj-temporal-plane}
\end{equation}
It is worth noting that the extra 1/(2r) term in Eq.~\eqref{eq:22CS-final} does not survive this limit, since it appears only at subleading order in $\hbar$. Separating $S=S_\tau(\tau)\pm J\psi-pr$ gives the post-Carrollian
momentum
\begin{equation}
P_{\rm PC}=\frac{1}{2}M_{\rm eff}\dot\tau^2+\frac{J^2}{2M_{\rm eff}\tau^2},
\qquad\dot\tau=d\tau/dr.
\label{eq:ppc-centrifugal}
\end{equation}
The post-Carrollian momentum \(P_{\rm PC}=mc^3/(2v^2)\), written in the form
\eqref{eq:ppc-centrifugal}, makes its interpretation more transparent. It has
the same structure as a Galilean kinetic energy on the temporal plane, with
\(r\) as the evolution parameter, \(M_{\rm eff}=mc^3\) as the effective mass,
and \(J\) giving the temporal centrifugal term. In this form, \(P_{\rm PC}\) is
seen directly as the generator of the reduced temporal dynamics. The same expression follows directly from the tachyonic mass shell. In the spatial carrier branch one has
\begin{equation}
P^2c^2=m^2c^4+E^2,
\qquad
E^2=E_\tau^2+\frac{J^2}{\tau^2}.
\end{equation}
For the post-Carrollian sector, expand the positive spatial momentum around the rapidly oscillating carrier,
\begin{equation}
P=mc\sqrt{1+\frac{E^2}{m^2c^4}}
=mc+\frac{E^2}{2mc^3}+\cdots .
\end{equation}
After subtracting the carrier momentum \(mc\), the finite post-Carrollian momentum is
\begin{equation}
P_{\rm PC}=\frac{E^2}{2mc^3}=\frac{E_\tau^2}{2M_{\rm eff}}+\frac{J^2}{2M_{\rm eff}\tau^2},
\label{eq:ppc-energy}
\end{equation}
where the second term is the centrifugal contribution absent in the 1 dimensional theory.

\subsection{Free trajectories, centrifugal barrier, and quantum correspondence}

Jacobi's theorem gives
\begin{equation}
r(\tau)=r_0+\sqrt{\frac{M_{\rm eff}}{2P_{\rm PC}}}\,
\sqrt{\tau^2-\tau_*^2},
\quad
\tau_*=\frac{|J|}{\sqrt{2M_{\rm eff}P_{\rm PC}}},
\label{eq:r-tau-trajectory}
\end{equation}
and $\psi(\tau)=\psi_0\mp\arcsin(\tau_*/\tau)$.
For $\tau<\tau_*$ the trajectory does not exist classically; $\tau_*$
behaves as the temporal turning radius.  The relation $\psi(\tau)=\psi_0\mp\arcsin(\tau_*/\tau)$
is equivalent to
\begin{equation}
\tau\sin(\psi-\psi_0)=\mathrm{const},
\label{eq:straight-line}
\end{equation}
which is a straight line in the Cartesian temporal plane $(t_1,t_2)$. Thus, even though the configuration space contains two temporal directions, a free classical trajectory selects a single straight temporal line by its initial data. The second temporal coordinate does not force any genuinely two-time wandering in the free sector;
conservation of $J$ implies that $\ d\psi /{d\tau}=J/(M_{\rm eff}\tau^2\dot\tau)$
accounts for the angular sweep needed to maintain rectilinearity.
The classical forbidden region $\tau<\tau_*$ has its quantum counterpart
in the Bessel node structure: $J_{|n|}(\lambda\tau)\sim\tau^{|n|}$
near $\tau=0$, so the wavefunction vanishes at the temporal origin
for $n\neq0$ and only the $n=0$ mode penetrates to $\tau=0$.  For
$\tau\gg\tau_*$, $r\simeq r_0+\sqrt{M_{\rm eff}/(2P_{\rm PC})}\,\tau$
and $\psi\to\psi_0$, recovering the one-dimensional dynamics~\cite{Rojas2025}.

\subsection{Interacting Hamilton equations}

With interaction momentum $F(r,\tau,\psi)$ and temporal-energy
shift $V(r,\tau,\psi)$, the radial Hamiltonian is
\begin{equation}
\mathcal{H}_r=F+\frac{(E-V)^2}{2M_{\rm eff}},
\quad
E=\sqrt{E_\tau^2+J^2/\tau^2},
\label{eq:interacting-H}
\end{equation}
with Hamilton equations ($\dot{f}=df/dr$)
\begin{subequations}
\begin{align}
\dot\tau&=\frac{E-V}{M_{\rm eff}}\frac{E_\tau}{E},
\quad
\dot\psi=\frac{E-V}{M_{\rm eff}}\frac{J}{\tau^2E},
\\
\dot E_\tau&=-\partial_\tau F+\frac{E-V}{M_{\rm eff}}
\!\left(\partial_\tau V+\frac{J^2}{\tau^3E}\right),
\\
\dot J&=-\partial_\psi F+\frac{E-V}{M_{\rm eff}}\partial_\psi V.
\end{align}
\end{subequations}
When $F$ and $V$ are $\psi$-independent, $J$ is conserved.  Setting
$J=0$ or taking $\tau\to\infty$ recovers the one-dimensional equations
of Refs.~\cite{Najafizadeh2025-1,Rojas2025}.

\section{Curved-Background Carrier Reduction}
\label{sec:gravitational-sector}

\subsection{Carrier congruence and local curved Carroll--Schr\"odinger equation}
\label{subsec:generalcurved}
For a probe tachyonic scalar $(\Box_g+\mu^2)\phi=0$ on a local
split-signature Einstein vacuum $R_{\mu\nu}[g]=0$, we introduce the
carrier $\phi=e^{-i\mu S}\widetilde\phi$, where the eikonal phase satisfies the unit spacelike condition
\begin{equation}
g^{\mu\nu}\nabla_\mu S\,\nabla_\nu S=1,
\label{eq:curved-eikonal-main}
\end{equation}
and defines the spacelike carrier congruence
\begin{equation}
V^\mu=g^{\mu\nu}\nabla_\nu S .
\label{eq:carrier-vector-main}
\end{equation}
Thus \(V^\mu V_\mu=1\). Substituting the factored-out field into the Klein--Gordon equation and canceling the leading mass term with Eq.~\eqref{eq:curved-eikonal-main}
gives
\begin{equation}
\Box_g\widetilde\phi
-2i\mu\left(V^\mu\nabla_\mu+\frac12\nabla_\mu V^\mu\right)\widetilde\phi=0 .
\label{eq:curved-envelope-before-limit-main}
\end{equation}
We now take the post-Carrollian contraction. Keeping fixed the generalized
post-Carrollian momentum derived in Appendix~\ref{app:curved-PC-momentum}
gives the same scaling as in the flat theory. The resulting local curved
Carroll--Schr\"odinger equation is
\begin{multline}
\frac{\hbar^2}{2M_{\rm eff}}\Delta_{\mathcal T}^{(g,V)}\widetilde\phi
+i\hbar\left(V^\mu\nabla_\mu+\frac12\nabla_\mu V^\mu\right)\widetilde\phi=0,
\\
M_{\rm eff}=mc^3 .
\label{eq:general-curved-CS-main}
\end{multline}
Here \(\Delta_{\mathcal T}^{(g,V)}\) denotes the leading inverse-metric
contraction of second derivatives along the two temporal directions selected by
the carrier branch. Appendix~\ref{app:general-curved-carrier} collects the general derivation, its
carrier-adapted Schr\"odinger form, and a brief discussion of the asymptotic
radial branch that recovers the flat equation with the \(i\hbar/(2r)\)
half-density term. As an illustrative example, the next
subsection specializes this construction to a particular split-signature vacuum
geometry.

\subsection{Illustrative example SO(2,1) parent symmetry}

A minimal requirement for a gravitational extension is asymptotic compatibility with the flat \(2+2\) theory: far from the gravitational source, the curved Carroll--Schr\"odinger equation must reduce to the flat equation derived in Sec.~\ref{sec:derivation}. This requirement selects backgrounds that approach flat Klein space asymptotically. There is, however, an additional issue that is special to multi-time geometries.
Evolution problems with more than one time direction are known to involve
nontrivial determinism and well-posedness constraints
\cite{FouresBruhat1952,ChoquetBruhatGeroch1969,FriedrichRendall2000,CraigWeinstein2009}. In the present construction the quantum theory is not
evolved in both temporal directions, but on equal-spatial slices. The spatial
coordinate therefore plays the role of the evolution parameter, while the
two-dimensional temporal plane becomes the configuration space. 
The \(SO(2,1)\)-symmetric sector provides a controlled illustrative background because, by the Kleinian analogue of Birkhoff's theorem \cite{EassonPezzelle2024}, it selects a unique asymptotically flat vacuum solution within that symmetry class, while remaining compatible with one-dimensional spatial post-Carrollian evolution in the presence of two timelike directions. Locally, define the fixed-angle hypersurface
\begin{equation}
\Sigma_\theta:\quad \theta=\mathrm{const} .
\end{equation}
Here \(\Sigma_\theta\) is introduced to exhibit the local \(SO(2,1)\)-symmetric block explicitly, while the more general curved Carroll--Schr\"odinger equation~\eqref{eq:general-curved-CS-main} is not tied to this particular presentation.
The corresponding local block is
\begin{equation}
\left.ds^2\right|_{\Sigma_\theta}
=
dx^{2}
-c^{2}dt_1^{2}
-c^{2}dt_2^{2},
\label{eq:reduced-flat-sector-sigma}
\end{equation}
with invariant
\begin{equation}
\varrho^{2}
=
x^{2}
-c^{2}(t_1^{2}+t_2^{2}) = x^2 - c^2 \tau^2.
\label{eq:SO21-invariant}
\end{equation}

The full parent space remains four-dimensional with split signature \((2,2)\); the \(SO(2,1)\) symmetry acts only on the local block containing one spatial direction and the two temporal directions, while the second spatial direction is a spectator. This distinction is important because pure \(2+1\) Einstein gravity has no local propagating gravitational degrees of freedom~\cite{DeserJackiwTHooft1984,Witten1988,Carlip2005}, whereas here the block is used only as a symmetry sector inside a four-dimensional Kleinian geometry.

The local block~\eqref{eq:reduced-flat-sector-sigma} is preserved by \(SO(2,1)\). The corresponding boosts \(K_a=ct_a\partial_x+\frac{x}{c}\partial_{t_a},  a=1,2,\) contract under \(c\to0\) to the Carroll boosts \(B_a=x\partial_{t_a}\), which are precisely the boosts appearing in the flat \(\mathfrak{sch}(2)\) algebra.

\subsection{\(SO(2,1)\) Kleinian Schwarzschild geometry}

In Klein space, for \(\varrho^2>0\), introduce hyperbolic coordinates in the reduced block by
\begin{equation}
x=\varrho\cosh\chi,
\qquad
c\tau=\varrho\sinh\chi,
\qquad
\varrho^2=x^2-c^2\tau^2 .
\label{eq:rho-chi}
\end{equation}
Then
\begin{equation}
dx^2-c^2d\tau^2-c^2\tau^2d\psi^2
=
d\varrho^2
-\varrho^2
\left(
d\chi^2+\sinh^2\chi\,d\psi^2
\right).
\label{eq:SO21-metric}
\end{equation}

The corresponding \(SO(2,1)\)-symmetric parent ansatz may be written as
\begin{equation}
ds^2
=
n(\zeta,\varrho)c^2d\zeta^2
+
p(\zeta,\varrho)d\varrho^2
-
\varrho^2
\left(
d\chi^2+\sinh^2\chi\,d\psi^2
\right),
\label{eq:SO21-parent-ansatz}
\end{equation}
where \(y=c\zeta\) is the spectator spatial direction not acted on by the
\(SO(2,1)\) symmetry.  The vacuum sector is defined by
\begin{equation}
G_{\mu\nu}=0 .
\label{eq:vacuum-einstein-kleinian}
\end{equation}
Solving these equations in the \(SO(2,1)\)-symmetric sector gives the Kleinian
Schwarzschild metric \cite{EassonPezzelle2024},
\begin{equation}
ds^2
=
f(\varrho)c^2d\zeta^2
+
f(\varrho)^{-1}d\varrho^2
-
\varrho^2
\left(
d\chi^2+\sinh^2\chi\,d\psi^2
\right),
\label{eq:kleinian-schwarzschild}
\end{equation}
where
\begin{equation}
f(\varrho)=1-\frac{A}{\varrho}.
\end{equation}
Here \(A\) is the Kleinian Schwarzschild length scale.

\subsection{\(x\)--adapted carrier and effective equation}
\label{subsec:curved-CS}

For the probe tachyonic scalar field in the coordinates of Eq.~\eqref{eq:kleinian-schwarzschild}, the parent dynamics is described by \(\mathcal L
=
\frac{1}{2}\sqrt{|g|}
\left(
g^{\mu\nu}\partial_\mu\phi\,\partial_\nu\phi
-\mu^2\phi^2
\right)\), with \(\mu=mc/\hbar\), and \((\Box_g+\mu^2)\phi=0\). Since the post-Carrollian dynamics developed in this work is formulated as evolution along one spatial direction, our goal is to obtain an \(x\)-adapted carrier representative. Before the contraction, however, the natural \(SO(2,1)\)-invariant carrier is the \(\varrho\)-adapted one. Denoting its action phase by \(S_0^{(\varrho)}=mc\,S^{(\varrho)}\), the unit spacelike eikonal condition (\ref{eq:curved-eikonal-main}) gives \(\partial_\varrho S_0^{(\varrho)}=mc/\sqrt{f(\varrho)}\). We implement the post-Carrollian contraction by scaling the temporal block, equivalently \(c\tau\to\epsilon c\tau\), so that \(\varrho_\epsilon
=
\sqrt{x^2-\epsilon^2c^2\tau^2}
=
x-\epsilon^2c^2\tau^2/(2x)
+\mathcal O(\epsilon^4)\), and
\begin{equation}
f(\varrho_\epsilon)
=
1-\frac{A}{x}
+\mathcal O(\epsilon^2).
\label{eq:rho-epsilon-carrier}
\end{equation}
Keeping finite the generalized post-Carrollian momentum derived in Appendix~\ref{app:curved-PC-momentum} gives the same scaling as in the flat theory, \( \mu\to \mu/ \epsilon^2\), \(  M_{\rm eff}=mc^3 \). Equivalently, the action phase appearing in the exponential is rescaled as \(S_{0,\epsilon}=S_0/\epsilon^2\). Therefore the \(\varrho\)-adapted and \(x\)-adapted phases differ by the finite limit
\begin{equation}
\frac{
S_{0,\epsilon}^{(\varrho)}(\varrho_\epsilon)
-
S_{0,\epsilon}^{(\varrho)}(x)
}{\hbar}
=
-\frac{M_{\rm eff}\tau^2}{2\hbar\,x\sqrt{f(x)}}
+\mathcal O(\epsilon^2).
\label{eq:carrier-phase-diff}
\end{equation}
Thus the \(x\)-adapted carrier used below is the local post-Carrollian representative of the original \(SO(2,1)\)-invariant \(\varrho\)-adapted branch, with the envelope differing only by this finite temporal phase.

We now write this representative as
\begin{equation}
\phi=e^{-i\mu S}\widetilde\phi
=
e^{-iS_0/\hbar}\widetilde\phi,
\quad
S_0=mc\,S,
\quad
V^\mu=g^{\mu\nu}\nabla_\nu S .
\label{eq:curved-carrier-redefinition}
\end{equation}
The corresponding unit eikonal condition on the chosen local \(x\)-branch, \(S=S(x)\), gives
\begin{equation}
\partial_x S=\frac{1}{\sqrt{f(x)}},
\qquad
\partial_\tau S=0
\label{eq:parent-carrier-momentum}
\end{equation}
to the order relevant in the post-Carrollian limit. Substitution into the tachyonic Klein--Gordon equation gives
\begin{equation}
\Box_g\widetilde\phi
-
2i\mu
\left(
V^\mu\nabla_\mu+\frac12\nabla_\mu V^\mu
\right)\widetilde\phi
=0 .
\label{eq:carrier-reduced-curved-kg}
\end{equation}

On the local fixed-angle branch one then finds
\begin{equation}
V^\mu\nabla_\mu
=
\sqrt{f(x)}\,\partial_x
+
\frac{f(x)-1}{\sqrt{f(x)}}\frac{\tau}{x}\partial_\tau,
\qquad
\nabla_\mu V^\mu=\mathcal D_x,
\label{eq:local-branch-V-divergence}
\end{equation}
with
\begin{equation}
\mathcal D_x
=
\partial_x\left(\sqrt{f(x)}\right)
+
\frac{2\left[f(x)-1\right]}{x\sqrt{f(x)}}.
\label{eq:Dx-def}
\end{equation}
 Multiplying Eq.~\eqref{eq:carrier-reduced-curved-kg} by \(-\epsilon^2\) and taking the \(\epsilon\to0\) limit yields the local \(SO(2,1)\)-adapted Carroll--Schr\"odinger equation
\begin{multline}
0=
\frac{\hbar^2}{2M_{\rm eff}}
\left(
\partial_\tau^2
+
\frac{1}{\tau}\partial_\tau
+
\frac{1}{\tau^2}\partial_\psi^2
\right)\widetilde\phi
\\
+i\hbar
\left[
\sqrt{f(x)}\,\partial_x
+
\frac{f(x)-1}{\sqrt{f(x)}}\frac{\tau}{x}\partial_\tau
\right]\widetilde\phi
+
\frac{i\hbar}{2}\mathcal D_x\,\widetilde\phi.
\label{eq:curved-CS-expanded-x}
\end{multline}
For \(f(x)\to1\), \(\mathcal D_x\to0\), and Eq.~\eqref{eq:curved-CS-expanded-x} reduces to
\begin{equation}
\frac{\hbar^2}{2M_{\rm eff}}\nabla_t^2\widetilde\phi
+
i\hbar
\partial_x\widetilde\phi
=
0,
\label{eq:curved-flat-limit-x}
\end{equation}
which is the flat Carroll--Schr\"odinger equation in the local fixed-angle patch, with \(x\) as the spatial evolution coordinate.
This form follows from the local equal-\(\theta\) branch used in the \(SO(2,1)\)-adapted construction, where the spatial evolution is directional. In the radial polar representative, the reduced equation carries the half-density term \(i\hbar/(2r)\), which may be removed by the standard redefinition \(\widetilde\phi=r^{-1/2}\Phi\). The radial and \(x\)-adapted representatives are different carrier realizations of the same post-Carrollian contraction, using a radial carrier in the former case and an \(x\)-adapted carrier in the latter. The common Carroll--Schr\"odinger structure underlying these representatives is the one analyzed throughout this work via the symmetry algebra, propagator structure, continuity equation, and explicit solution sectors. The asymptotic radial limit leading to this one-dimensional representative is discussed in Appendix~\ref{app:general-curved-carrier}. The corresponding asymptotic limit may impose additional restrictions on the allowed Einstein vacuum geometries, while the Kleinian Schwarzschild branch considered here is sufficient to exhibit the effective gravitational Carroll--Schr\"odinger dynamics in the asymptotic region \(x\gg A\).

\subsection{Classical limit and exact integrated trajectories}
\label{subsec:curved-classical}

The WKB limit at leading order in $\hbar$ gives the curved
Hamilton--Jacobi equation
\begin{equation}
\sqrt{f}\,\partial_xS+\frac{f-1}{\sqrt{f}}\frac{\tau}{x}\partial_\tau S
+\frac{1}{2M_{\rm eff}}\!\left[(\partial_\tau S)^2+\frac{(\partial_\psi S)^2}{\tau^2}\right]=0,
\label{eq:curved-HJ}
\end{equation}
with effective radial Hamiltonian
\begin{equation}
H_{\rm PC}=\frac{f-1}{f}\frac{\tau}{x}P_\tau
+\frac{1}{2M_{\rm eff}\sqrt{f(x)}}
\!\left(P_\tau^2+\frac{J^2}{\tau^2}\right).
\label{eq:curved-Hamiltonian}
\end{equation}
The Hamilton equations ($\dot{g}=dg/dx$) are
\begin{subequations}
\begin{align}
\dot\tau&=\frac{f-1}{f}\frac{\tau}{x}+\frac{P_\tau}{M_{\rm eff}\sqrt{f}},
\label{eq:tau-eq}
\\
\dot\psi&=\frac{J}{M_{\rm eff}\sqrt{f}\,\tau^2},
\label{eq:psi-eq-grav}
\\
\dot P_\tau&=-\frac{f-1}{f}\frac{P_\tau}{x}
+\frac{J^2}{M_{\rm eff}\sqrt{f}\,\tau^3},
\label{eq:ptau-eq}
\\
\dot J&=0.
\label{eq:J-eq}
\end{align}
\end{subequations}
The gravitational field introduces a radial drift in $\tau$ proportional
to $(f-1)/f$ and rescales the temporal kinetic term by $f^{-1/2}$.
The angular momentum $J$ is conserved.  In the flat limit these
equations reduce to the free double-polar Hamilton equations.

For the Schwarzschild profile \(f(x)=1-A/x\), the Hamilton equations imply
\begin{equation}
\frac{d(f\tau)}{dx}
=
\frac{\sqrt{f}\,P_\tau}{M_{\rm eff}}.
\label{eq:d-ftau}
\end{equation}
On the temporal plane, with \(\mathbf{t}=(t_1,t_2)=\tau(\cos\psi,\sin\psi)\), this relation becomes
\begin{equation}
\frac{d}{dx}\!\left[f(x)\,\mathbf{t}(x)\right]
=
\frac{f(x)^{3/2}}{M_{\rm eff}f_i}\,\mathbf{P}_i,
\qquad
f_i\equiv f(x_i),
\label{eq:affine-vector}
\end{equation}
where  \( 
\mathbf{P}_i
=
P_{\tau,i}(\cos\psi_i,\sin\psi_i)
+
\frac{J}{\tau_i}(-\sin\psi_i,\cos\psi_i)
\). Integrating,
\begin{equation}
f(x)\,\mathbf{t}(x)
=f_i\,\mathbf{t}_i+\frac{I(x)}{M_{\rm eff}f_i}\,\mathbf{P}_i,
\qquad
I(x)=\int_{x_i}^x\!f(s)^{3/2}\,ds,
\label{eq:affine-solution}
\end{equation}
where $I(x)$ is the optical radial variable that replaces the
flat distance $x-x_i$ as the natural propagation parameter in the
curved Kleinian background.  In polar form, Eq.~\eqref{eq:affine-solution}
gives
\begin{equation}
\tau(x)=\frac{1}{f(x)}\!\left[
\!\left(f_i\tau_i+\frac{\Pi_i I}{M_{\rm eff}f_i}\right)^{\!2}
+\!\left(\frac{JI}{M_{\rm eff}f_i\tau_i}\right)^{\!2}
\right]^{1/2},
\label{eq:tau-curved}
\end{equation}
\begin{equation}
\tan[\psi(x)-\psi_i]=
\frac{JI/(M_{\rm eff}f_i\tau_i)}{f_i\tau_i+\Pi_i I/(M_{\rm eff}f_i)},
\label{eq:psi-curved}
\end{equation}
with $P_\tau(x)=\Pi_i f(x)/f_i$.

For $f(x)=1-A/x$, the integral $I(x)$ has the closed form
\begin{multline}
I(x)=(x+2A)\sqrt{f(x)}-3A\operatorname{arccosh}\sqrt{x/A}
\\
-\left[(x_i+2A)\sqrt{f(x_i)}-3A\operatorname{arccosh}\sqrt{x_i/A}\right].
\label{eq:I-exact}
\end{multline}
For $A/x\ll1$,
\begin{equation}
I(x)\approx(x-x_i)-\frac{3A}{2}\ln\frac{x}{x_i}+\mathcal{O}\!\left(\frac{A^2}{x}\right).
\label{eq:I-expansion}
\end{equation}

\subsubsection{Temporal-plane lensing}
\label{subsec:lensing}

Equation~\eqref{eq:affine-solution} shows that the Kleinian
Schwarzschild branch produces a lensing-type distortion on the temporal plane:
it replaces the flat distance $x-x_i$ by $I(x)/f_i$ and rescales the
observed temporal vector by $1/f(x)$.  The angular deflection of the
temporal trajectory, relative to the flat branch, follows by
differentiating Eq.~\eqref{eq:psi-curved} with respect to $I$.  To
first order in $A/x$, with $\Delta x=x_f-x_i$ and $f_i\simeq1$,
\begin{equation}
\delta\psi_f\simeq
-\frac{3A}{2}\ln\frac{x_f}{x_i}\,
\frac{J/M_{\rm eff}}{
\!\left(\tau_i+\dfrac{\Pi_i\Delta x}{M_{\rm eff}}\right)^{\!2}
+\!\left(\dfrac{J\Delta x}{M_{\rm eff}\tau_i}\right)^{\!2}}.
\label{eq:lensing-angle}
\end{equation}
This formula quantifies the angular
deviation of a temporal-plane trajectory caused by the Kleinian
gravitational source in the $x$-adapted directional branch. The invariant content of the branch result is the optical-path excess \(\Delta\mathcal I(x)=I(x)-(x-x_i)=-\frac{3A}{2}\ln(x/x_i)+\mathcal O(A^2/x)\), which converts the free temporal angular sweep into the deflection \eqref{eq:lensing-angle} through the conserved temporal angular momentum \(J\). The deflection vanishes for $J=0$ (radial
temporal trajectories are not deflected angularly) and grows
logarithmically in $x_f/x_i$ for large baselines.

In the flat limit $A\to0$, $I(x)\to x-x_i$, and
Eq.~\eqref{eq:affine-solution} reduces to $\mathbf{t}(x)=\mathbf{t}_i
+(x-x_i)\mathbf{P}_i/M_{\rm eff}$, which is the straight-line temporal
trajectory of Sec.~\ref{sec:classical-limit}.

\subsection{Optical variables and branch propagator}
\label{subsec:quantum-lensing}

We now use the same optical radial variable in the wave equation. In the
\(x\)-adapted branch, a dilation of the temporal plane together with a field
rescaling puts Eq.~\eqref{eq:curved-CS-expanded-x} in the flat form; this is
useful for writing the Green function.

\begin{proposition}[Optical form of the branch equation]
\label{prop:quantum-lensing}
For \(f(x)=1-A/x\), define
\begin{equation}
\begin{aligned}
\boldsymbol\xi&=f(x)\,\mathbf t,\\
I(x)&=\int^x f(s)^{3/2}\,ds,\\
\widetilde\phi&=f(x)^{3/4}\Phi .
\end{aligned}
\label{eq:exact-map-defs}
\end{equation}
Then the equation for \(\Phi(\boldsymbol\xi,I)\) becomes
\begin{equation}
i\hbar\,\partial_I\Phi
=-\frac{\hbar^2}{2M_{\rm eff}}\nabla_\xi^2\Phi,
\qquad
\nabla_\xi^2=\partial_{\xi_1}^2+\partial_{\xi_2}^2 .
\label{eq:free-in-optical}
\end{equation}
\end{proposition}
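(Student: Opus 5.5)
The plan is a direct change of variables in the local branch equation \eqref{eq:curved-CS-expanded-x}, treating \((x,\mathbf t)\mapsto(I,\boldsymbol\xi)\) as a point transformation and then absorbing the leftover zeroth-order term into the prefactor \(f^{3/4}\). Throughout I will use the single identity specific to the Schwarzschild profile,
\[
f'(x)=\frac{A}{x^2}=\frac{1-f(x)}{x},
\]
which is what makes the drift and the divergence term match.

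\textbf{Step 1: transform the derivatives.} At fixed \(\mathbf t\) the chain rule gives
\[
\partial_x=I'(x)\,\partial_I+f'\,\mathbf t\cdot\nabla_\xi
= f^{3/2}\partial_I+\frac{f'}{f}\,\boldsymbol\xi\cdot\nabla_\xi .
\]
Since \(\boldsymbol\xi\) is a pure dilation of \(\mathbf t\), the Euler operator is invariant, \(\tau\partial_\tau=\mathbf t\cdot\nabla_t=\boldsymbol\xi\cdot\nabla_\xi\), and the Laplacian rescales as \(\nabla_t^2=f^2\nabla_\xi^2\). The angular part of \(\nabla_t^2\) causes no separate issue, because \(\psi\) is unchanged by the dilation.

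\textbf{Step 2: the drift cancels.} Substituting Step 1, the first-order operator in \eqref{eq:curved-CS-expanded-x} becomes
\[
\sqrt f\,\partial_x+\frac{f-1}{\sqrt f}\frac{\tau}{x}\partial_\tau
=f^2\partial_I+\Bigl[\frac{f'}{\sqrt f}-\frac{1-f}{x\sqrt f}\Bigr]\boldsymbol\xi\cdot\nabla_\xi .
\]
The bracket vanishes by the identity for \(f'\). The gravitational drift on the temporal plane is therefore exactly the dilation generated by \(\boldsymbol\xi=f\mathbf t\). This is the wave-level counterpart of the classical relation \eqref{eq:d-ftau}.

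After dividing by \(f^2\), the equation reads
\[
\frac{\hbar^2}{2M_{\rm eff}}\nabla_\xi^2\widetilde\phi+i\hbar\,\partial_I\widetilde\phi+\frac{i\hbar}{2f^2}\mathcal D_x\widetilde\phi=0 .
\]

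\textbf{Step 3: remove the potential-like term.} Insert \(\widetilde\phi=f^{3/4}\Phi\). The operator \(\partial_I\) acting on \(f^{3/4}\) produces
\[
\tfrac34 f^{-1/4}f'\,f^{-3/2}\,\Phi ,
\]
which is \(\tfrac34 f'f^{-5/2}\) relative to \(f^{3/4}\Phi\). Using \(f'=(1-f)/x\), the definition \eqref{eq:Dx-def} gives
\[
\tfrac12\mathcal D_x=\frac{f'}{4\sqrt f}-\frac{f'}{\sqrt f}=-\frac34\frac{f'}{\sqrt f},
\]
so \(\mathcal D_x/(2f^2)=-\tfrac34 f'f^{-5/2}\). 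The two contributions cancel. Dividing by \(f^{3/4}\) then leaves exactly \eqref{eq:free-in-optical}.

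\textbf{Main obstacle.} I expect the main difficulty to be bookkeeping rather than anything conceptual. One must keep \(\partial_x\) at fixed \(\mathbf t\) in \eqref{eq:curved-CS-expanded-x} distinct from \(\partial_I\) at fixed \(\boldsymbol\xi\). One must also check that \(\nabla_\xi\) commutes with the prefactor \(f^{3/4}\), which holds because that prefactor depends on \(x\) only. I would verify the exponent \(3/4\) independently by requiring that the conserved density map correctly,
\[
|\widetilde\phi|^2\,d^2t=f^{3/2}|\Phi|^2 f^{-2}\,d^2\xi ,
\]
and comparing with the continuity equation of the branch. This check is consistent with the half-density role played by \(r^{-1/2}\) in the flat case.
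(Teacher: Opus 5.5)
Your proof is correct and is essentially the paper's argument. The dilation $\boldsymbol\xi=f\mathbf t$ turns the drift into $[f'+(f-1)/x]\,\boldsymbol\xi\cdot\nabla_\xi/\sqrt f$, which vanishes for $f=1-A/x$, and with $dI/dx=f^{3/2}$ the prefactor $f^{3/4}$ cancels $\mathcal D_x=-\tfrac32 f'/\sqrt f$; the paper keeps the exponent $p$ general and solves for $p=3/4$. One correction to your optional cross-check: the branch continuity equation conserves $\sqrt f\,|\widetilde\phi|^2d^2t$, not $|\widetilde\phi|^2d^2t$, and it is this weighted density that maps exactly to $|\Phi|^2d^2\xi$.
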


\begin{proof}
At fixed \(x\), the dilation \(\boldsymbol\xi=f\mathbf t\) gives
\(\nabla_t^2=f^2\nabla_\xi^2\). Writing
\(\sigma=f\tau=|\boldsymbol\xi|\), one also has
\begin{equation}
\partial_x\big|_t
=
\partial_x\big|_\xi+\frac{f'}{f}\sigma\partial_\sigma .
\label{eq:x-derivative-optical-bla}
\end{equation}
For \(f=1-A/x\), the first-order terms proportional to
\(\sigma\partial_\sigma\) cancel since \(f'=A/x^2\) and
\((f-1)/x=-A/x^2\). Now write the field rescaling more generally as
\(\widetilde\phi=f^p\Phi\). The zeroth-order terms coming from
\(\partial_x f^p\) and from the measure term in
Eq.~\eqref{eq:curved-CS-expanded-x} cancel for \(p=3/4\). With this choice, and
using \(dI/dx=f^{3/2}\), the branch equation reduces to
Eq.~\eqref{eq:free-in-optical}.
\end{proof}

The flat kernel associated with Eq.~\eqref{eq:free-in-optical} first gives the
evolution of the rescaled field \(\Phi\),
\begin{equation}
K_\Phi(\boldsymbol\xi,I;\boldsymbol\xi',I')
=
\Theta(\Delta I)
\frac{M_{\rm eff}}{2\pi i\hbar\,\Delta I}
\exp\!\left[
\frac{iM_{\rm eff}|\boldsymbol\xi-\boldsymbol\xi'|^2}
{2\hbar\,\Delta I}
\right],
\label{eq:optical-kernel-Phi}
\end{equation}
where \(\Delta I=I-I'\). Returning to the original variables of the branch gives
\begin{equation}
\widetilde\phi(\mathbf t,x)
=
\int d^2t'\,
G^{\widetilde\phi}_{\rm br}(\mathbf t,x;\mathbf t',x')
\,\widetilde\phi(\mathbf t',x'),
\label{eq:curved-propagator-evolution}
\end{equation}
with
\begin{multline}
G^{\widetilde\phi}_{\rm br}(\mathbf t,x;\mathbf t',x')
=
\Theta(\Delta I)\,
f(x)^{3/4}f(x')^{5/4}
\frac{M_{\rm eff}}{2\pi i\hbar\,\Delta I}
\\
\times
\exp\!\left[
\frac{iM_{\rm eff}|f(x)\mathbf t-f(x')\mathbf t'|^2}
{2\hbar\,\Delta I}
\right],
\label{eq:curved-propagator}
\end{multline}
where now \(\Delta I=I(x)-I(x')\).

\section{Discussion}
\label{sec:discussion}

The post-Carrollian contraction of the tachyonic Klein--Gordon equation in flat Klein space $\mathbb K^{2,2}$ leads to a Carroll--Schr\"odinger equation whose equal-radius configuration space is the temporal two-plane. An important point in the derivation is that the scaling $\mu\to\mu/\epsilon^2$ selects the sector in which the post-Carrollian momentum $P_{\rm PC}=E^2/(2mc^3)$ remains finite. This leaves $M_{\rm eff}=mc^3$ as the effective inertial parameter of the reduced theory, and the classical post-Carrollian momentum can be written as $P_{\rm PC}=\frac{1}{2}M_{\rm eff}\dot\tau^2+J^2/(2M_{\rm eff}\tau^2)$, with $\dot\tau=d\tau/dr$, which fits naturally with the temporal interpretation of the reduced motion. In the large-$\tau$ regime, the centrifugal term becomes negligible and the classical dynamics approach the one-dimensional post-Carrollian relations.

The additional temporal direction introduces a conserved temporal angular momentum, organizes the regular modes into vortex sectors, generates the centrifugal term in the Hamilton--Jacobi limit, and controls oscillator degeneracies, effective connection sectors, and the two-body relative holonomy. The reduced theory therefore realizes on the temporal configuration plane several structures familiar from planar quantum mechanics, including Bessel sectors, radial oscillator towers, Aharonov--Bohm holonomy, Landau-type curvature, and Fock--Darwin splitting. The connection sectors used here are effective couplings on this reduced plane, and a first-principles post-Carrollian electromagnetic construction remains a natural direction for future work.

Once the relative temporal coordinate lies in \(\mathbb R^2\setminus\{0\}\), inequivalent $U(1)$ quantizations become available. This gives an anyon-like temporal holonomy and a fractional exchange phase under half-monodromy of the relative coordinate, with the realized holonomy depending on the microscopic interaction or on the choice of core regularization.

The curved construction carries the same carrier idea beyond the flat background. The generalized post-Carrollian quantity is kept finite along a chosen spatial branch, leading to a branch-dependent reduced equation on split-signature Einstein backgrounds. In the illustrative $SO(2,1)$-symmetric Kleinian Schwarzschild sector, the selected $x$-adapted branch contains drift and measure terms induced by the geometry, and at the classical level the Kleinian gravitational source produces an angular deviation on the temporal plane.

A deeper treatment of the curved sector, together with a first-principles derivation of post-Carrollian gauge couplings, would be natural next steps. Within this framework, the $(2,2)$ extension of the Carroll--Schr\"odinger program brings together temporal angular momentum, vortex sectors, anyonic relative holonomy, and branch-dependent curved effects in a single reduced system.

\section*{Acknowledgements}

The authors thank Manuel Beato for useful discussions.

\appendix

\section{Scaling of the post-Carrollian mass parameter}
\label{app:pc-scaling}

We show here that the scaling \(\mu\to\mu/\epsilon^2\) follows from keeping the post-Carrollian momentum finite. Consider the family of equivalent Carroll contractions \( c\to c_\epsilon=\epsilon^\alpha c        \), \( \tau\to\tau_\epsilon=\epsilon^{1-\alpha}\tau         \), \(  m\to m_\epsilon=\epsilon^\beta m,  \) with fixed spatial coordinates. This preserves the physical scaling \( c_\epsilon\tau_\epsilon=\epsilon\,c\tau        \), so that different values of \(\alpha\) simply distribute the same contraction between \(c\) and the temporal coordinate.

Since \(v\sim dx/d\tau\), fixed spatial coordinates give \( v_\epsilon= dx/d\tau_\epsilon = \epsilon^{\alpha-1}v \). Therefore the post-Carrollian momentum scales as
\begin{equation}
P_{{\rm PC},\epsilon}
=
\frac{m_\epsilon c_\epsilon^3}{2v_\epsilon^2}
=
\epsilon^{\beta+\alpha+2}P_{\rm PC}.
\label{eq:ppc-scaling}
\end{equation}
Keeping \(P_{{\rm PC},\epsilon}\) finite and nonzero requires \( \beta=-\alpha-2  \).
It follows that the Klein--Gordon mass parameter scales as \( \mu_\epsilon
=
m_\epsilon c_\epsilon/\hbar
=
\epsilon^{\beta+\alpha}\mu
=
\epsilon^{-2}\mu     \). Thus \(\mu\to\mu/\epsilon^2\), independently of how the contraction is represented.

The convention used in the main text corresponds to \(\alpha=1\), namely \( c\to\epsilon c \), \( \tau\to\tau \), \(m\to\epsilon^{-3}m \), for which the invariant quantity \(M_{\rm eff}=mc^3\) is manifest. 

\section{Generalized carrier functional and radial integral form}
\label{app:generalized-potential}

We briefly summarize a useful way of viewing the radial carrier factor used in the post-Carrollian reduction in Sec. \ref{sec:derivation}. To preserve linear superposition, a general field redefinition \( \phi = F(\widetilde\phi, q_1, q_2)\) must be linear in the reduced field, so we write \(\phi=g(q_1,q_2;\mu)\widetilde\phi+b(q_1,q_2;\mu)\) and keep only the multiplicative carrier \(g\).

After imposing the Carrollian scaling \(c\tau\to\epsilon c\tau\), \(\mu\to\mu/\epsilon^2\), and defining \(\kappa:=\mu/\epsilon^2\), the leading carrier equation may be written as
\begin{equation}
\nabla^2 g+
\left[
\kappa^2+\frac{V_{\rm geom}(q_1,q_2)}{\epsilon^2}
\right]g=0.
\label{eq:g-master-short}
\end{equation}
For the rotationally symmetric case \(g=g(r)\), this becomes
\begin{equation}
\frac{1}{r}\frac{d}{dr}\left(r\frac{dg}{dr}\right)
+
\left[
\kappa^2+\frac{V_{\rm geom}(r)}{\epsilon^2}
\right]g(r)=0.
\label{eq:g-radial-short}
\end{equation}

The rescaled carrier used in the main text is \(g(r)=\dfrac{\epsilon}{\sqrt{\mu}}\exp(-i\mu r/\epsilon^2)=\dfrac{1}{\sqrt{\kappa}}e^{-i\kappa r}\). Substituting into Eq.~\eqref{eq:g-radial-short} gives \(V_{\rm geom}(r)=i\mu/r\), which is the term present in Eq.~\eqref{eq:22CS-final}. This should be understood as the effective measure term generated by the polar radial carrier.

For completeness, writing \(g(r)=f(r)/\sqrt r\), Eq.~\eqref{eq:g-radial-short} becomes
\begin{equation}
f''(r)+\kappa^2 f(r)
=
-
\left[
\frac{1}{4r^2}
+
\frac{V_{\rm geom}(r)}{\epsilon^2}
\right]f(r).
\label{eq:f-helmholtz-short}
\end{equation}
Hence \(f\) satisfies the integral equation
\begin{equation}
f(r)=f_0(r)-\int dr'\,G_\kappa(r,r')
\left[
\frac{1}{4r'^2}
+
\frac{V_{\rm geom}(r')}{\epsilon^2}
\right]f(r'),
\label{eq:f-integral-short}
\end{equation}
where \(f_0(r)=Ae^{i\kappa r}+Be^{-i\kappa r}\). For outgoing boundary conditions one may take \(G_\kappa(r,r')=e^{i\kappa|r-r'|}/(2i\kappa)\). This gives the corresponding integral representation of the carrier factor \(g(r)\).

\section{Details of the curved carrier reduction}
\label{app:general-curved-carrier}

This appendix records the curved-background carrier reduction used in the main text before specializing to the \(SO(2,1)\)-symmetric vacuum. Let \(g\) be a local split-signature Einstein metric, \(R_{\mu\nu}[g]=0\), and let the probe scalar obey \((\nabla_\nu\nabla^\nu+\mu^2)\phi=0\), with \(\mu=mc/\hbar\). We introduce the carrier ansatz \(\phi=e^{-i\mu S}\widetilde\phi\), where the eikonal phase satisfies \(g^{\mu\nu}\nabla_\mu S\nabla_\nu S=1\), and define the spacelike carrier congruence \(V^\mu=g^{\mu\nu}\nabla_\nu S\). Substitution gives
\begin{multline}
\Box_g\widetilde\phi
-2i\mu V^\mu\nabla_\mu\widetilde\phi
-i\mu(\nabla_\mu V^\mu)\widetilde\phi
\\
+\mu^2\left(1-g^{\mu\nu}\nabla_\mu S\nabla_\nu S\right)
\widetilde\phi=0.
\end{multline}
Using the eikonal condition, this reduces to
\begin{equation}
\Box_g\widetilde\phi
-2i\mu\left(V^\mu\nabla_\mu+\frac12\nabla_\mu V^\mu\right)\widetilde\phi=0 .
\label{eq:app-general-curved-envelope}
\end{equation}

We now take the post-Carrollian contraction. As in the flat theory, the limit is defined by keeping finite the generalized post-Carrollian momentum associated with the chosen carrier branch. In the representative used in the main text, \(c\to\epsilon c\) with the coordinates fixed; since \(P_{\rm PC}^{(g)}\) is proportional to the combination \(mc^3\), this gives \(M_{\rm eff}=mc^3=\text{fixed}\) and \(\mu\to\mu/\epsilon^2\), as in Appendices~\ref{app:pc-scaling} and~\ref{app:curved-PC-momentum}. Multiplying Eq.~\eqref{eq:app-general-curved-envelope} by \(-\epsilon^2\) and retaining the leading finite terms leaves the second-order operator acting on the temporal directions selected by the carrier branch. Denoting this temporal block by
\(
\Delta_{\mathcal T}^{(g,V)}
\),
one obtains
\begin{equation}
\frac{\hbar^2}{2M_{\rm eff}}
\Delta_{\mathcal T}^{(g,V)}\widetilde\phi
+i\hbar\left(
V^\mu\nabla_\mu+\frac12\nabla_\mu V^\mu
\right)\widetilde\phi=0.
\label{eq:app-general-curved-CS}
\end{equation}
Here \(V^\mu\) fixes the spatial evolution direction, while \(\Delta_{\mathcal T}^{(g,V)}\) is the temporal Laplacian of the reduced branch.

In a smooth carrier-adapted patch, choose coordinates \((\rho,y^a,z^A)\) such that \(V^\mu\partial_\mu=\partial_\rho\). If \(h\) is the determinant of the metric induced on \(\rho=\mathrm{const}\), then \(\nabla_\mu V^\mu=\partial_\rho\ln\sqrt{|h|}\), and Eq.~\eqref{eq:app-general-curved-CS} becomes
\begin{equation}
\frac{\hbar^2}{2M_{\rm eff}}\Delta_{\mathcal T}^{(g,V)}\widetilde\phi
+i\hbar\left(\partial_\rho+\frac12\partial_\rho\ln\sqrt{|h|}\right)\widetilde\phi=0 .
\label{eq:app-curved-CS-adapted}
\end{equation}
The half-density transformation \(\widetilde\phi=|h|^{-1/4}\Psi\) (which, in the double polar flat limit, reduces to the field redefinition used throughout the paper         
\(\widetilde\phi
=
\dfrac{1}{\sqrt r}\Phi \)) removes the measure term and gives the associated Schr\"odinger form 
\begin{equation}
i\hbar\partial_\rho\Psi
=
-\frac{\hbar^2}{2M_{\rm eff}}
|h|^{1/4}\Delta_{\mathcal T}^{(g,V)}|h|^{-1/4}\Psi .
\label{eq:app-curved-schrodinger-form}
\end{equation}

\subsection{Asymptotic radial limit}

Under asymptotically radial conditions, with \(V\to\partial_r\),
\(\sqrt{|h|}\sim c^2r\), and the temporal inverse metric approaching
the flat temporal plane, one has 
\begin{equation}
\nabla_\mu V^\mu\to\frac{1}{r},
\qquad
\Delta_{\mathcal T}^{(g,V)}\to\partial_{t_1}^2+\partial_{t_2}^2 .
\end{equation}
Thus, for Einstein-vacuum backgrounds whose asymptotic radial region satisfies these conditions, Eq.~\eqref{eq:app-general-curved-CS} reduces at leading order to
\begin{equation}
\frac{\hbar^2}{2M_{\rm eff}}(\partial_{t_1}^2+\partial_{t_2}^2)\widetilde\phi
+i\hbar\!\left(\partial_r+\frac{1}{2r}\right)\widetilde\phi=0,
\label{eq:app-asymptotic-radial}
\end{equation}
which is the flat \(2+2\) Carroll--Schr\"odinger equation~\eqref{eq:22CS-final},
with the \(i\hbar/(2r)\) half-density term arising from
\(\nabla_\mu V^\mu\to1/r\).

\section{Local post-Carrollian momentum in curved space}
\label{app:curved-PC-momentum}

For a tachyonic worldline in the block-diagonal metric
$ds^2=-c^2h_{AB}dq^Adq^B+\gamma_{ij}dx^idx^j$, the exact spatial
momentum is $P_i=mc\,\gamma_{ij}v^j/\sqrt{X-c^2T}$ with
$T=h_{AB}u^Au^B$ and $X=\gamma_{ij}v^iv^j$.  Expanding for
$c^2T/X\ll1$,
\begin{equation}
P_i=mc\,\widehat v_i+\frac{mc^3}{2}\frac{T}{X}\widehat v_i+ \cdots
\end{equation}
so the local post-Carrollian momentum is
\begin{equation}
(P_{\rm PC})_i=\frac{mc^3}{2}
\frac{h_{AB}u^Au^B\,\gamma_{ij}v^j}{(\gamma_{k\ell}v^kv^\ell)^{3/2}}.
\end{equation}
Under the post-Carrollian contraction, keeping $(P_{\rm PC})_i$ finite
requires the same scaling $\mu\to\mu/\epsilon^2$ as in the flat case.

\bibliography{References}

\end{document}